\titlespacing\section{0pt}{4pt plus 6pt minus 2pt}{4pt plus 6pt minus 2pt}
\titlespacing\subsection{0pt}{4pt plus 6pt minus 2pt}{4pt plus 6pt minus 2pt}
\titlespacing\subsubsection{0pt}{4pt plus 6pt minus 2pt}{4pt plus 6pt minus 2pt}
\declaretheorem[style=definition,qed=$\triangle$,numberwithin=section]{definition}
\declaretheorem[style=theorem, sibling=definition]{theorem}
\declaretheorem[style=theorem, sibling=definition]{lemma}
\declaretheorem[style=theorem, sibling=definition]{proposition}
\declaretheorem[style=theorem, sibling=definition]{corollary}
\newcommand{\ld}{\rotatebox[origin=c]{45}{\footnotesize	$\square$}}
\newcommand{\lb}{\rotatebox[origin=c]{0}{$\footnotesize \square$}}
\title{Filtration and canonical completeness \\ for continuous modal $\mu$-calculi}
\author{Jan Rooduijn\footnote{The research of this author has been made possible by a grant from the Dutch Research Council NWO, project nr. 617.001.857.}
	\institute{ILLC\\
		University of Amsterdam\\}
	\email{j.m.w.rooduijn@uva.nl}
	\and
Yde Venema
	\institute{ILLC\\
	University of Amsterdam\\}
\email{y.venema@uva.nl}
}
\begin{document}
	\maketitle
	\begin{abstract}
	The continuous modal $\mu$-calculus is a fragment of the modal $\mu$-calculus, where the application of fixpoint operators is restricted to formulas whose functional interpretation is Scott-continuous, rather than merely monotone. By game-theoretic means, we show that this relatively expressive fragment still allows two important techniques of basic modal logic, which notoriously fail for the full modal $\mu$-calculus: filtration and canonical models. In particular, we show that the Filtration Theorem holds for formulas in the language of the continuous modal $\mu$-calculus. As a consequence we obtain the finite model property over a wide range of model classes. Moreover, we show that if a basic modal logic $\mathsf{L}$ is canonical and the class of $\mathsf{L}$-frames admits filtration, then the logic obtained by adding continuous fixpoint operators to $\mathsf{L}$ is sound and complete with respect to the class of $\mathsf{L}$-frames. This generalises recent results on a strictly weaker fragment of the modal $\mu$-calculus, \textit{viz.} $\mathsf{PDL}$. 
	\end{abstract}
	
	\section{Introduction}
	\paragraph{Filtration and canonical models} This paper concerns two key methods in the theory of modal logic, both of which were introduced in their modern forms by Lemmon \& Scott in~\cite{lemmon1977introduction}. First, \emph{filtration}, which allows one to shrink a Kripke model into a finite one, by identifying states that agree on the truth of some given finite set of formulas. The \textit{Filtration Theorem} then states that the equivalence classes in the finite model satisfy the same formulas as their members do in the original model. Filtration is the most important tool for proving the finite model property and the decidability of modal logics. For an overview of recent developments in the theory of filtration, see~\cite{vanmodern}. 
	
	The other method central to this paper is that of \textit{canonical models}. This well-known technique for proving the completeness of modal logics is related to Henkin's method for first-order logic. Given a modal logic $\mathsf{L}$, it allows one to construct the \textit{canonical model} $\mathbb{S}^\mathsf{L}$ of $\mathsf{L}$ with the powerful property that a formula $\varphi$ is consistent in the logic $\mathsf{L}$ if and only if it is satisfiable in $\mathbb{S}^\mathsf{L}$. It follows that $\mathsf{L}$ is complete with respect to any class of frames containing the \textit{canonical frame}, \textit{i.e.} the frame underlying $\mathbb{S}^\mathsf{L}$. Thus, when then the canonical frame is a frame for $\mathsf{L}$ - in this case $\mathsf{L}$ is said to be \textit{canonical} - the logic $\mathsf{L}$ is complete with respect to the class of frames for $\mathsf{L}$.
	
	\paragraph{Modal fixpoint logics}
	Modal fixpoints logics are extensions of basic modal logic by operators capable of expressing certain kinds of recursive statements. They are of particular interest for computer science, where they are used to express important properties of processes. Examples of modal fixpoint logics are common knowledge logic ($\mathsf{CKL}$), provability logic ($\mathsf{GL}$), propositional dynamic logic ($\mathsf{PDL}$) and computation tree logic ($\mathsf{CTL}$). The central modal fixpoint logic, in which each of the aforementioned logics can be interpreted, is the modal $\mu$-calculus ($\mu \mathsf{ML}$), introduced by Kozen in~\cite{kozen1983results}. It extends basic modal logic with \textit{explicit} least and greatest fixed point operators, resulting in a large gain of expressive power. Although many desirable properties, such as decidability and bisimulation invariance, withstand this gain in expressive power, the methods of filtration and canonical models do not. 
	
	In fact, the method of canonical models breaks down already in the case of relatively simple modal fixpoint logics. The reason is that these logics generally lack the compactness property, preventing the use of infinite maximally consistent sets. If, however, the method of filtration \textit{does} work for such a logic $\mathsf{L}$, then the canonical model method can often be salvaged. This roughly works as follows. One begins by taking the canonical model $\mathbb{S}^\mathsf{L}$. Due to the compactness failure, this model is \textit{non-standard}, meaning that the frame underlying $\mathbb{S}^\mathsf{L}$ fails to satisfy some desired properties. However, by applying filtration to $\mathbb{S}^\mathsf{L}$ we obtain a finite model (a \textit{finitary canonical model}), whose underlying frame often does satisfy these desired properties. This procedure for instance underlies the completeness proof for $\mathsf{PDL}$ by Kozen \& Parikh in~\cite{kozen1981elementary}. In the book~\cite{goldblatt1987logics}, Goldblatt applies the same procedure to several modal fixpoint logics, including $\mathsf{CTL}$. 
	
	In the recent paper~\cite{kikot2020completeness}, Kikot, Shapirovsky \& Zolin, prove a result of this kind that is relatively wide in scope. They show that if a basic modal logic $\mathsf{L}$ allows the method of filtration, then so does its expansion with the transitive closure modality. By iterating this procedure they show the same for the expansion of $\mathsf{L}$ by all modalities of $\mathsf{PDL}$. Subsequently, if the original basic modal logic $\mathsf{L}$ moreover is canonical, the completeness of this $\mathsf{PDL}$-expansion of $\mathsf{L}$ can be obtained by applying filtration to its canonical model.
	
	\paragraph{The continuous modal $\mu$-calculus}
	In this paper we consider the methods of filtration and canonical models for a specific fragment of $\mu \mathsf{ML}$, which is called the \textit{continuous modal $\mu$-calculus} and is denoted $\mu_c \mathsf{ML}$. In the paper~\cite{fontaine2008continuous}, Fontaine shows that there are two equivalent ways to define $\mu_c \mathsf{ML}$. First semantically, as the fragment of the modal $\mu$-calculus where the application of fixpoint operators is restricted to formulas whose functional interpretation is Scott-continuous, rather than merely monotone. And second syntactically, as the fragment where the modal operator $\lb$ and the fixpoint operator $\nu$ are not allowed to occur in the scope of a $\mu$-operator (and dually for the $\nu$-operator). To the best of our knowledge, the logic $\mu_c\mathsf{ML}$ was mentioned first
	in van Benthem~\cite{bent:moda06} under the name ‘$\omega\text{-}\mu$-calculus’. It is related, and perhaps equivalent in expressive power, to the logic 
	$\mathsf{CPDL}$ of concurrent propositional dynamic logic, cf. Carreiro~\cite[section 3.2]{carreiro2015fragments} for more information.
	
	There are at least two reasons why the continuous $\mu$-calculus is 
	an interesting logic; first, the continuity condition that is imposed on the 
	formation of fixpoint formulas ensures that the construction of a definable 
	fixpoint using its ordinal approximations will always be finished after $\omega$ 
	many steps.
	And second, in the same manner that the full $\mu$-calculus is the 
	bisimulation-invariant fragment of monadic second-order 
	logic~\cite{jani:expr96}, $\mu_c \mathsf{ML}$ has the same expressive power as 
	\emph{weak} monadic second-order logic, when it comes to bisimulation-invariant 
	properties~\cite{carr:powe20}.
	
	The goal of the present paper is to show that we can add two more desirable properties to this list: (i) the Filtration Theorem holds for $\mu_c \mathsf{ML}$ and (ii) completeness for sufficiently nice logics in the language of $\mu_c \mathsf{ML}$ can be proven using finitary canonical models. 
	
 	Since $\mu_c \mathsf{ML}$ is strictly more expressive than $\mathsf{PDL}$~\cite{fontaine2008continuous, carreiro2015fragments}, this is a proper generalisation of the aforementioned results from the paper~\cite{kikot2020completeness}. On the other hand, because the failure of filtration for $\mu \mathsf{ML}$ is witnessed by the formula $\mu x.\lb x$, the syntactic restrictions characterising $\mu_c \mathsf{ML}$ seem to be not only sufficient, but also necessary for filtration. This indicates that $\mu_c \mathsf{ML}$ might be positioned as a maximal filtration-allowing language between the basic modal language and the full language of the modal $\mu$-calculus. We leave it for future work to make this statement mathematically precise and to investigate its correctness.
 	
 	\paragraph{Overview of the paper} In Section 2 we define the syntax of the continuous modal $\mu$-calculus, the game semantics and other basic notions. In Section 3 we treat filtration. After giving the necessary definitions, we will use game-theoretic arguments to prove the Filtration Theorem for the language $\mu_c \mathsf{ML}$. As a corollary, we obtain the finite model property for this language interpreted over a wide range of model classes. In Section 4 we prove our completeness result, again using game-theoretic methods.
	
	Unlike for $\mathsf{PDL}$, there is no obvious way to construct a non-standard canonical model for $\mu_c \mathsf{ML}$.  Because of this, we define the finitary canonical model used in our completeness proof directly, instead of as some filtration of a non-standard canonical model. This causes Section 3 and Section 4 to contain some rather similar constructions and proofs. We leave it for future work to unify these two.
	\section{The continuous modal $\mu$-calculus}
	\paragraph{Syntax}  The continuous modal $\mu$-calculus will be defined using the syntactic characterisation given by Fontaine in~\cite{fontaine2008continuous}. We fix a countably infinite set $\mathsf{P}$ of propositional variables.
	\begin{definition}
		By simultaneous induction we define the following three languages.
		\begin{enumerate}[label = (\roman*)]
			\item The syntax $\mu_c \mathsf{ML}$ of the \textit{continuous modal $\mu$-calculus}:
			\[
			\varphi ::= p \ | \ \neg p \ | \ \varphi \lor \varphi \ | \ \varphi \land \varphi \ | \   \ld \varphi \ | \  \lb \varphi \ | \ \mu x .\varphi' | \ \nu x. \varphi''
			\]
			where $p, x \in \mathsf{P}$ and $\varphi' \in \mathsf{Con}_{\{x\}}(\mu_c \mathsf{ML})$, and
			$\varphi'' \in \mathsf{Cocon}_{\{x\}}(\mu_c \mathsf{ML})$.
			\item  For $\mathsf{X} \subseteq \mathsf{P}$, the fragment $\mathsf{Con}_{\mathsf{X}}(\mu_c \mathsf{ML})$ of $\mu_c \mathsf{ML}$-formulas that are \textit{continuous} in $\mathsf{X}$:
			\[
			\varphi ::= x  \mid \alpha \mid \varphi \lor \varphi \mid \varphi \land \varphi \mid  \ld \varphi \mid \mu y. \varphi'
			\]
			where $x \in \mathsf{X}$, $y \in \mathsf{P}$, $\alpha \in \mu_c \mathsf{ML}$ $\mathsf{X}$-free, and $\varphi' \in \mathsf{Con}_{\mathsf{X} \cup \{y\}} (\mu_c \mathsf{ML})$.
			\item   For $\mathsf{X} \subseteq \mathsf{P}$, the fragment $\mathsf{Cocon}_{\mathsf{X}}(\mu_c \mathsf{ML})$ of $\mu_c \mathsf{ML}$-formulas that are \textit{cocontinuous} in $\mathsf{X}$:
			\[
			\varphi ::= x  \mid \alpha \mid \varphi \lor \varphi \mid \varphi \land \varphi \mid  \lb \varphi \mid \nu y. \varphi'
			\]
			where $x \in \mathsf{X}$, $y \in \mathsf{P}$, $\alpha \in \mu_c \mathsf{ML}$ $\mathsf{X}$-free, and $\varphi' \in \mathsf{Cocon}_{\mathsf{X} \cup \{y\}} (\mu_c \mathsf{ML})$. \hfill \qedhere
		\end{enumerate}
	\end{definition}
	 If one of the above fragments is subscripted by a singleton $\{x\}$,  we will simply write $x$ instead. We will use \textit{formula} to refer to a $\mu_c \mathsf{ML}$-formula. We define the subformula relation $\unlhd$ and, for a given formula $\xi$, the sets $\textnormal{Sf}(\xi)$ of subformulas, $\textnormal{FV}(\xi)$ of free variables and $\textnormal{BV}(\xi)$ of bound variables of $\xi$ in the usual way. Given two formulas $\varphi, \psi$ and a propositional variable $x$, we define $\varphi[\psi/x]$ to be the result of replacing each free occurrence of $x$ in $\varphi$ by $\psi$. We will assume an implicit mechanism of $\alpha$-conversion in order to avoid the capture of free variables of $\psi$ by binders in $\varphi$ in the substitution $\varphi[\psi/x]$.
	 
	 We say that a formula is \textit{tidy} if the sets of its free and its bound variables are disjoint. A formula $\varphi$ is called \textit{clean} if, in addition, we can associate with each bound variable $x$, a unique fixpoint binder $\eta_x$ and a unique formula $\delta_x$ such that $\eta x  . \delta_x$ is a subformula of $\varphi$. In this case, if $\eta_x = \mu$ ($\eta_x = \nu$), the variable $x$ is said to be a \textit{$\mu$-variable} (\textit{$\nu$-variable}). We will sometimes denote by $\overline \eta$ the dual of $\eta$. Note that every subformula of a clean formula is itself clean. For convenience we will assume that every formula is tidy. Finally, we will use $\mathsf{ML}$ to refer to the basic modal language (over the set $\mathsf{P}$ of propositional variables).\newpage 
	\begin{definition}
		The \textit{FL-closure} of a set $\Phi$ of  $\mu_{c}\mathsf{ML}$-formulas is the least $\Psi \supseteq \Phi$ such that:
		\begin{enumerate}[label = (\roman*), noitemsep]
			\item If $\neg p \in \Psi$, then $p \in \Psi$;
			\item If $\varphi \circ \psi \in \Psi$ for $\circ \in \{\lor, \land\}$, then $\varphi, \psi \in \Psi$;
			\item If $\heartsuit \varphi \in \Psi$ for $\heartsuit \in \{\ld, \lb\}$, then $\varphi \in \Psi$;
			\item If $\eta x. \varphi \in \Psi$ for $\eta \in \{\mu, \nu\}$, then $\varphi[\eta x.\varphi / x] \in \Psi$.
		\end{enumerate}
		We write $Cl(\Phi)$ for the FL-closure of $\Phi$ and say that $\Phi$ is \textit{FL-closed} if $Cl(\Phi) = \Phi$. If $\Phi = \{\varphi\}$ is a singleton, we simply write $Cl(\varphi)$.
	\end{definition}
	It is a well-known fact that the closure of a finite set of formulas is finite. Note, moreover, that in the FL-closure of a set of tidy formulas, every formula is tidy.

	We say of a subformula  $\varphi \unlhd \xi$ that it is a \textit{free} subformula of $\xi$, and write $\varphi \unlhd_f \xi$, if $\varphi \in Cl(\xi)$. Equivalently, a subformula $\varphi \unlhd \xi$ is a free subformula of $\xi$ whenever every free variable of $\varphi$ is a free variable of $\xi$. 
	\paragraph{Algebraic semantics} As usual, formulas will be interpreted in Kripke models.
	\begin{definition}
		A \textit{Kripke frame} is a pair $(S, R)$ consisting of a set $S$ of \textit{states} together with an \textit{accessibility relation} $R \subseteq S \times S$. A \textit{Kripke model} is a triple $(S, R, V)$, where $(S, R)$ is a Kripke frame and $V : \mathsf{P} \rightarrow \mathcal{P}(S)$ a \textit{valuation function}.
	\end{definition}
	Given some accessibility relation $R$, we often write $sRt$ instead of $(s, t) \in R$. The algebraic semantics of the continuous $\mu$-calculus extends that of the basic modal language. Given a valuation function $V : \mathsf{P} \rightarrow \mathcal{P}(S)$, we write $V[x \mapsto X]$ for the function given by $V[x \mapsto X](x) = X$ and $V[x \mapsto X](y) = V(y)$ for $y \not= x$.
	\begin{definition}
		We define for every formula $\varphi$ its \textit{meaning} $[\![\varphi]\!]^\mathbb{S} \subseteq S$ in any model $\mathbb{S} = (S, R, V)$ by the following induction on formulas:
	\begin{align*}
			[\![\mu x .\varphi]\!]^\mathbb{S} &:= \bigcap \{X \subseteq S : [\![\varphi]\!]^{\mathbb{S}[x \mapsto X]} \subseteq X\} \\
			[\![\nu x .\varphi]\!]^\mathbb{S} &:= \bigcup \{X \subseteq S : X \subseteq [\![\varphi]\!]^{\mathbb{S}[x \mapsto X]}\} 
		\end{align*}
	and the propositional and modal cases are as usual.
	\end{definition}
		
	We say that $\xi$ is \textit{satisfied} at a state $s$ of the model $\mathbb{S}$, and write $\mathbb{S}, s \Vdash \xi$ whenever $s \in [\![\xi]\!]^\mathbb{S}$. As usual, we say that $\xi$ is \textit{valid} in $\mathbb{S}$, written $\mathbb{S} \models \xi$, whenever $\xi$ is satisfied at every state $s$ of $\mathbb{S}$, and \textit{valid} in the frame $(S, R)$, written $(S, R) \models \xi$, whenever $(S, R, V) \models \xi$ for every valuation function $V : \mathsf{P} \rightarrow \mathcal{P}(S)$. 
	
Two formulas are called \textit{equivalent} whenever they have the same meaning in every Kripke model. It easy to see that every formula has an equivalent alphabetic variant which is clean.
	
	\paragraph{Game semantics} A well-known equivalent characterisation of the meaning of a formula uses the formalism of infinite games. We assume familiarity with this kind of games.
	\begin{definition}
		Given a clean formula $\xi$, we define the \textit{dependency order} $<_\xi$ on $\text{BV}(\xi)$ as the least strict partial order such that $x <_\xi y$ whenever $\delta_x \lhd \delta_y$ and $y \lhd \delta_x$.
	\end{definition}
	Note that for formulas of the continuous $\mu$-calculus $x <_\xi y$ implies that $x$ is a $\mu$-variable if and only if $y$ is a $\mu$-variable. In other words, the continuous modal $\mu$-calculus is \textit{alternation free}.
	\begin{definition}
		Let $\xi$ be a clean formula and let $\mathbb{S} = (S, R, V)$ be a Kripke model. The \textit{evaluation game} $\mathcal{E}(\xi, \mathbb{S})$ takes positions in $\textnormal{Sf}(\xi) \times S$ and has the following ownership function and admissible moves.
	\begin{center}
		\begin{tabular}{| c l | c | c| }
			\hline
			Position && Player & Admissible moves \\ \hline
			$(\varphi_1 \lor \varphi_2, s)$ && $\exists$ & $\{(\varphi_1, s), (\varphi_2, s)\}$ \\ 
			$(\varphi_1 \land \varphi_2, s)$ && $\forall$ & $\{(\varphi_1, s), (\varphi_2, s)\}$ \\  
			$(\ld \varphi, s)$ && $\exists$ & $\{(\varphi, t) : sRt\}$ \\  
			$(\lb \varphi, s)$ && $\forall$ & $\{(\varphi, t) : sRt\}$ \\  
			$(\eta x .\delta_x, s)$ && - & $\{(\delta_x, s)\}$ \\  
			$(x, s)$& with $x \in \textnormal{BV}(\xi)$ & - & $\{(\delta_x, s)\}$ \\
			$(p, s)$& with $p \in \textnormal{FV}(\xi)$ and $s \in V(p)$ & $\forall$ & $\emptyset$ \\
			$(\neg p, s)$& with $p \in \textnormal{FV}(\xi)$ and $s \in V(p)$ & $\exists$ & $\emptyset$ \\
			$(p, s)$& with $p \in \textnormal{FV}(\xi)$ and $s \not \in V(p)$ & $\exists$ & $\emptyset$ \\
			$(\neg p, s)$& with $p \in \textnormal{FV}(\xi)$ and $s \not \in V(p)$ & $\forall$ & $\emptyset$ \\
			\hline
		\end{tabular}
	\end{center}
	For $\gamma$ a match in $\mathcal{E}(\xi, \mathbb{S})$, we denote the first position of $\gamma$ by $\mathsf{first}(\gamma)$ and, if $\gamma$ is finite, the last position by $\mathsf{last}(\gamma)$. A finite match $\gamma$ is won by one of the players whenever $\mathsf{last}(\gamma)$ is owned by its opponent and this opponent's set of admissible moves is empty (in this case the opponent is said to have gotten \textit{stuck}). An infinite match is won by $\exists$ ($\forall$) if the $<_\xi$-highest variable that is unfolded infinitely often is a $\nu$-variable (a $\mu$-variable). We write $(\varphi, s) \in \text{Win}_\exists(\mathcal{E}(\xi, \mathbb{S}))$ to denote that $\exists$ has a winning strategy in the game $\mathcal{E}(\xi, \mathbb{S})$ initialised at position $(\varphi, s)$. 
	\end{definition}
	The following lemma contains some basic facts about the course of play in evaluation games for $\mu_c \mathsf{ML}$. Items (1) and (2) hold because the continuous $\mu$-calculus is alternation free. Item (3) is specific to the continuous modal $\mu$-calculus, in the sense that it does not hold for the more expressive \textit{alternation free $\mu$-calculus} (see \textit{e.g.}~\cite{martivenemafocus} for a formal definition of this language).
	\begin{lemma} 
		\label{lem:conplay} Let $\mathbb{S}$ be a model and let $\xi$ be a clean formula.
		\begin{enumerate}
			\item In any infinite match of the game $\mathcal{E}(\xi, \mathbb{S})$, either all variables that are unfolded infinitely often are $\mu$-variables, or all are $\nu$-variables.
			\item If a match of the game $\mathcal{E}(\xi, \mathbb{S})$ progresses from a position $(s, \eta x . \delta)$ to a position $(t, \overline \eta y . \theta)$, then in between it must pass a position $(r, \varphi)$ with $\varphi \lhd_f \xi$.
			\item If a match of the game $\mathcal{E}(\xi, \mathbb{S})$ progresses from a position $(s, \mu x . \delta)$ to a position $(t, \lb \psi)$, then in between it must pass a position $(r, \varphi)$ with $\varphi \lhd_f \xi$. \hfill \qedhere
		\end{enumerate}
	\end{lemma}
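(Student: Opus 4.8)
The plan is to derive all three items from the region decomposition of $\xi$ induced by the $\mathsf{Con}/\mathsf{Cocon}$-grammar, together with a simple ``trapping'' property of evaluation games. Call a maximal subtree of $\textnormal{Sf}(\xi)$ all of whose nodes are produced by the $\mathsf{Con}_{\mathsf{X}}$-clauses (for varying finite $\mathsf{X}$) a \emph{continuous region}, and dually for $\mathsf{Cocon}$; its \emph{skeleton} consists of the nodes actually built by the clauses for variables, $\lor$, $\land$, $\ld$ and $\mu$ (resp.\ $\lb$ and $\nu$), and its \emph{junctions} are the subformulas $\alpha$ inserted through the $\alpha$-clause ($\mathsf{X}$-free $\mu_c\mathsf{ML}$-formulas). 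Note that the binder at the top of a continuous region and every binder on its \emph{spine} is a $\mu$-binder, and that a continuous skeleton contains no occurrence of $\lb$ and no $\nu$-binder; dually for cocontinuous regions. This last fact is precisely the syntactic content of continuity, and is what makes item (3) stronger than what alternation-freeness alone would give.

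The first thing I would prove is the syntactic \textbf{Lemma: every junction, and every region-top binder, is a free subformula of $\xi$}. Suppose some $u \in \textnormal{BV}(\xi)$ occurs free in a junction $\alpha$, and trace the path in $\textnormal{Sf}(\xi)$ from the binder $\eta u.\delta_u$ down towards $\alpha$: since $\alpha$ lives in full $\mu_c\mathsf{ML}$-mode, this path must leave the $\mathsf{Con}/\mathsf{Cocon}$-grammar through some junction $\alpha'$ with $\alpha \unlhd \alpha'$, but $\alpha'$ is free of every bound variable on the spine above it, in particular of $u$, so $u \notin \textnormal{FV}(\alpha') \supseteq \textnormal{FV}(\alpha)$, a contradiction. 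Hence no bound variable of $\xi$ is free in $\alpha$, i.e.\ $\alpha \in Cl(\xi)$ (and the region-top case is the same argument). The second ingredient is the \textbf{trapping lemma}: if a match of $\mathcal{E}(\xi,\mathbb{S})$ ever occupies a position $(\chi, r)$ with $\chi$ a free subformula of $\xi$, then every later position has a formula $\unlhd \chi$. This is a one-line induction on moves: subformula moves and binder moves stay inside $\chi$, and an ``escaping'' unfolding would unfold a bound variable of $\xi$ occurring free in the current formula, hence free in $\chi$, contradicting $\chi \in Cl(\xi)$.

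Items (2) and (3) now fall out. Suppose a match passes from $(\eta x.\delta, s)$ to a later position that is either $(\overline{\eta} y.\theta, t)$ (item 2) or $(\lb\psi, t)$ (item 3); take $\eta = \mu$, the case $\eta = \nu$ being dual. From $(\mu x.\delta, s)$ the match moves to $(\delta_x, s)$, a node of the skeleton of the continuous region of $\mu x.\delta$. Inside that skeleton a position's formula is never a $\nu$-binder and never a box-formula $\lb\chi$, and --- using that the only bound variables of $\xi$ occurring free in a skeleton formula are that region's spine variables (a consequence of the syntactic lemma applied to the region top), whose bodies lie again in the skeleton --- every move either stays in the skeleton or steps, via a $\lor/\land/\ld$-move, into a junction. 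Hence before reaching the target the match must occupy a position $(\alpha, r)$ with $\alpha$ a junction, and by the syntactic lemma $\alpha \in Cl(\xi)$, the required free subformula. (The one bookkeeping subtlety is that $\alpha$ may coincide with the target position, e.g.\ for $\xi = \mu x.\ld(x \lor \lb p)$; so the free subformula is reached at or before the target, and by the trapping lemma the match then never leaves $\alpha$.)

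Finally, for item (1): in an infinite match some variable is unfolded infinitely often (a match with only finitely many unfoldings is finite, since $\lor/\land/\ld/\lb$-moves strictly decrease the subformula order). If a $\mu$-variable $x$ is unfolded infinitely often, the match visits the skeleton of $x$'s continuous region infinitely often; by the analysis above it can only leave that skeleton through a junction, after which it is trapped inside that junction, which contains no occurrence of the spine variable $x$ --- impossible. So from some point on the match stays in one continuous skeleton, all of whose bound variables are $\mu$-variables, whence no $\nu$-variable is unfolded infinitely often; the case where a $\nu$-variable is unfolded infinitely often is symmetric. (Alternatively, (1) and (2) can be had from the standard fact that the variables unfolded infinitely often in an evaluation game possess a $<_\xi$-greatest element, together with the observation recorded right after the definition of the dependency order, that $<_\xi$-comparable variables of a $\mu_c\mathsf{ML}$-formula have the same fixpoint type.) The main obstacle throughout is the syntactic lemma that junctions are free subformulas of $\xi$: this is where the prohibition on $\nu$- and $\lb$-occurrences under $\mu$ (and dually) is genuinely used, and getting the region/spine bookkeeping precise is the part that needs care.
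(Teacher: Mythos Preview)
The paper does not actually prove this lemma; it only remarks, immediately before the statement, that items (1) and (2) ``hold because the continuous $\mu$-calculus is alternation free'' and that item (3) is specific to $\mu_c\mathsf{ML}$. Your proposal is therefore a correct fleshed-out proof where the paper offers only a one-line hint. The region decomposition into continuous and cocontinuous skeletons with their junctions, the syntactic lemma that junctions and region-top binders are free subformulas of $\xi$, and the trapping lemma combine cleanly to yield all three items. Your parenthetical alternative for item (1) --- via the $<_\xi$-maximal infinitely-unfolded variable together with the paper's own observation (recorded just after the definition of $<_\xi$) that $<_\xi$-comparable variables share their fixpoint type --- is exactly what the paper's hint amounts to, and is the shorter route for (1) and (2). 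Your region analysis is, however, genuinely required for (3), which uses the absence of $\lb$ on continuous spines and not merely alternation-freeness. Finally, your reading of ``in between'' as inclusive of the target position is the correct one: your example $\mu x.\ld(x \lor \lb p)$ shows the strict reading fails, and the later invocations of the lemma in the paper (e.g.\ in the proof of Lemma~\ref{lem:innerind}) are consistent with the inclusive reading.
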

	We say of an enumeration $\{x_1, \ldots x_n\}$ of $\text{BV}(\xi$) that it \textit{respects} the dependency order if $x_i <_\xi x_j$ implies $i < j$. Since any partial order can be extended to a linear order, every formula $\xi$ admits an enumeration of its bound variables that respects the dependency order. For the rest of this paper we fix such an enumeration of $\text{BV}(\xi)$ for every clean formula $\xi$. 
	\begin{definition}
		Let $\xi$ be a clean formula with $\text{BV}({\xi}) = \{x_1, \ldots, x_n\}$. For any subformula $\varphi \unlhd \xi$, we define its \textit{expansion} with respect to $\xi$ as:
		\[
		\exp_\xi(\varphi) := \varphi[\eta x_1\delta_{x_1}/x_1]\cdots[\eta x_n\delta_{x_n}/x_n]. \hfill \qedhere
		\]
	\end{definition}
	Note that when $\varphi \unlhd_f \xi$, it holds that $\exp_\xi (\varphi) = \varphi$. The following well-known theorem provides the central link between the algebraic and the game semantics. 
	\begin{theorem}
		\label{thm:gamalg}
		For any clean formula $\xi$ and subformula $\varphi \unlhd \xi$ it holds that:
		\[
		(\varphi, s) \in \text{Win}_\exists(\mathcal{E}(\xi, \mathbb{S})) \Leftrightarrow \mathbb{S}, s \Vdash \exp_\xi(\varphi).
		\]
		for any model $\mathbb{S}$ and state $s$ of $\mathbb{S}$.
	\end{theorem}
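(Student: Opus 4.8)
Theorem \ref{thm:gamalg} is the adequacy theorem for the modal $\mu$-calculus specialised to the continuous fragment, and I would prove it by the standard strategy-transfer argument, exploiting alternation-freeness (Lemma \ref{lem:conplay}(1)) and Scott-continuity to keep the bookkeeping light. The plan is to prove the two implications separately; since the evaluation game is a parity game — the priorities being induced by $<_\xi$, which is well-defined as a parity assignment precisely because the logic is alternation-free — it is determined, so the second implication can alternatively be obtained from the first applied to a suitable dual, meaning that a single strategy construction, run once for $\exists$ and once symmetrically for $\forall$, suffices. For the direction $\Leftarrow$, assume $\mathbb{S}, s \Vdash \exp_\xi(\varphi)$ and have $\exists$ play so as to preserve the invariant that every visited position $(\psi, t)$ satisfies $\mathbb{S}, t \Vdash \exp_\xi(\psi)$. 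That the invariant can be maintained at $\exists$-owned positions, is automatically preserved at $\forall$-owned positions, at the automatic fixpoint- and variable-unfolding positions, and at atomic positions, follows from the semantic clauses together with two elementary facts: that $\exp_\xi$ commutes with the Boolean and modal connectives, and that $[\![\exp_\xi(\eta x.\delta_x)]\!]^{\mathbb{S}} = [\![\exp_\xi(x)]\!]^{\mathbb{S}} = [\![\exp_\xi(\delta_x)]\!]^{\mathbb{S}}$, the last equality being the fixpoint-unfolding identity. Verifying this identity uses that the fixed enumeration of $\text{BV}(\xi)$ respects $<_\xi$: for a bound variable $x = x_i$, the substitutions for $x_j$ with $j \neq i$ that are $<_\xi$-below $x$ act vacuously on $\delta_{x_i}$ and the remaining ones commute with $[\eta x_i.\delta_{x_i}/x_i]$ since $x_i$ does not occur free in them, which after pulling out the binder reduces the claim to the Knaster–Tarski unfolding rule. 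Granting the invariant, any finite play consistent with the strategy ends with $\forall$ stuck at a true atom, so $\exists$ wins; the mathematical content lies entirely in the infinite plays.

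For an infinite play consistent with this strategy, Lemma \ref{lem:conplay}(1) leaves only two possibilities: the variables unfolded infinitely often are all $\nu$-variables, in which case $\exists$ wins, or they are all $\mu$-variables, which I must rule out. Suppose $x$ is a $\mu$-variable unfolded infinitely often and, without loss of generality, $<_\xi$-maximal among those unfolded infinitely often. Write the least fixpoint $[\![\exp_\xi(\mu x.\delta_x)]\!]^{\mathbb{S}}$ as the union of its approximants; because $\exp_\xi(\delta_x)$ is still continuous in $x$ (the expansion only substitutes $x$-free formulas for the later variables), these stabilise already at stage $\omega$, so natural-number approximants suffice. Assign to each position $(x, t)$ — equivalently $(\mu x.\delta_x, t)$ — occurring on the play the least $n$ with $t$ in the $n$-th approximant, and show, using the invariant together with the fact that past some point no $<_\xi$-larger variable is unfolded, that this index never increases along the play and strictly decreases at each unfolding of $x$: between two successive unfoldings of $x$ the play re-enters $\delta_x$, and invoking the invariant at that point, unwound through the approximant equation, drops the index by one. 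This produces an infinite strictly descending sequence of natural numbers, a contradiction; hence the play is won by $\exists$. The direction $\Rightarrow$ is proved symmetrically, with $\forall$ maintaining the invariant $\mathbb{S}, t \not\Vdash \exp_\xi(\psi)$ and $\nu$-approximants descending from the top playing the role of the $\mu$-approximants, or else it follows from $\Leftarrow$ by determinacy and duality.

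The main obstacle, and the only genuinely non-routine step, is the signature-descent argument: making precise how the approximant index of the $(x,t)$-positions is defined so that it provably decreases, which requires locating the re-entry point into $\delta_x$ between consecutive unfoldings of $x$ and controlling the $<_\xi$-smaller variables that may also be unfolded infinitely often along the same play. The secondary technical point is the substitution bookkeeping behind the unfolding identity, which is where the dependency-respecting enumeration of $\text{BV}(\xi)$ is essential; this is standard but fiddly, and I would isolate it as a preliminary lemma on $\exp_\xi$.
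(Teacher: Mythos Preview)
The paper does not prove Theorem \ref{thm:gamalg}: it is introduced with the phrase ``the following well-known theorem'' and stated without proof, serving as a black box for the rest of the paper. There is therefore no proof in the paper to compare your proposal against.

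On its own merits, your sketch is the standard adequacy argument for the modal $\mu$-calculus --- a truth-preserving strategy for $\exists$ combined with an approximant-descent argument to exclude infinite plays in which a $\mu$-variable is unfolded infinitely often --- pleasantly streamlined for the continuous fragment by using Scott-continuity to work with $\omega$-indexed approximants and by invoking Lemma \ref{lem:conplay}(1) to avoid any genuine parity reasoning. The outline is sound. The one place I would tighten is exactly the step you flag as the main obstacle: tracking only the approximant index of the single $<_\xi$-maximal infinitely-unfolded variable $x$ is delicate when strictly smaller $\mu$-variables $y$ are also unfolded between consecutive visits to $x$, since the free occurrence of $x$ reached may lie inside $\delta_y$ rather than directly in $\delta_x$; the cleanest fix is to use a full signature (a tuple of approximant indices, one per variable, compared lexicographically along the reversed dependency order), which is the device used in the standard proofs for the full $\mu$-calculus and which specialises without difficulty here. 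Your remarks on the substitution bookkeeping behind the unfolding identity for $\exp_\xi$ are accurate and would indeed be best isolated as a preliminary lemma.
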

	In particular, for any clean formula $\xi$ and $\varphi \unlhd_f \xi$ we have $\mathbb{S}, s \Vdash \varphi$ if and only if $\exists$ has a winning strategy in the game $\mathcal{E}(\xi, \mathbb{S})$ initialised at the position $(\varphi, s)$. 
	
	Another useful fact, originally provided by Dexter Kozen in~\cite{kozen1983results}, is the following.
	\begin{proposition}
		For any clean formula $\xi$:
		\[
		Cl(\xi) = \{\exp_\xi(\varphi) : \varphi \unlhd \xi\}.
		\]
	\end{proposition}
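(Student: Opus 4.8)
Write $D := \{\exp_\xi(\varphi) \mid \varphi \unlhd \xi\}$. The two inclusions are proved separately. Both use that $\exp_\xi$, being a composition of substitutions, commutes with the Boolean and modal connectives, and that a bound variable or a fixpoint subformula of $\xi$ always expands to a formula whose main connective is a fixpoint operator; hence the outermost shape of $\exp_\xi(\varphi)$ already determines the outermost shape of $\varphi$.

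For $Cl(\xi) \subseteq D$ I would appeal to the minimality of $Cl(\xi)$: it suffices to check $\xi \in D$ and that $D$ is FL-closed. Since $\xi$ is tidy, none of $x_1,\dots,x_n$ occurs free in $\xi$, so $\exp_\xi(\xi) = \xi \in D$. For FL-closedness, take $\exp_\xi(\varphi) \in D$. If $\exp_\xi(\varphi)$ is $\psi_1 \circ \psi_2$ ($\circ \in \{\lor,\land\}$), $\heartsuit\psi$, or $\neg p$, then by the shape remark $\varphi$ has the matching form and the FL-successors of $\exp_\xi(\varphi)$ are exactly the $\exp_\xi$ of the immediate subformulas of $\varphi$, hence in $D$. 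If $\exp_\xi(\varphi)$ is a fixpoint formula, then $\varphi$ is an occurrence of a fixpoint subformula $\eta x_j.\delta_{x_j}$ or of a bound variable $x_j$, and one computes $\exp_\xi(\varphi) = \eta x_j.(\delta_{x_j}\sigma_{>j})$, where $\sigma_{>j}$ denotes the tail $[\eta x_{j+1}\delta_{x_{j+1}}/x_{j+1}]\cdots[\eta x_n\delta_{x_n}/x_n]$ of the expansion substitution; the substitution lemma then gives $(\delta_{x_j}\sigma_{>j})[\eta x_j.(\delta_{x_j}\sigma_{>j})/x_j] = \exp_\xi(\delta_{x_j})$, which is the required FL-successor and lies in $D$ because $\delta_{x_j} \unlhd \xi$.

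For $D \subseteq Cl(\xi)$ I would pick, for each subformula of $\xi$, a fixed occurrence of it in $\xi$ and induct on the depth of that occurrence. The root occurrence gives $\exp_\xi(\xi) = \xi \in Cl(\xi)$. For an occurrence of $\varphi$ at positive depth, its parent is an occurrence of some $\varphi^+ \unlhd \xi$ of strictly smaller depth, so $\exp_\xi(\varphi^+) \in Cl(\xi)$ by the induction hypothesis. If $\varphi^+ = \varphi_1 \circ \varphi_2$ (with $\varphi$ one of the $\varphi_i$) or $\varphi^+ = \heartsuit\varphi$, then $\exp_\xi(\varphi^+)$ has the corresponding main connective and a single FL-step yields $\exp_\xi(\varphi) \in Cl(\xi)$. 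If $\varphi^+ = \eta x_j.\delta_{x_j}$, so $\varphi = \delta_{x_j}$, then $\exp_\xi(\varphi^+) = \eta x_j.(\delta_{x_j}\sigma_{>j}) \in Cl(\xi)$ and the fixpoint FL-rule produces $(\delta_{x_j}\sigma_{>j})[\eta x_j.(\delta_{x_j}\sigma_{>j})/x_j] = \exp_\xi(\delta_{x_j}) = \exp_\xi(\varphi)$.

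Everything rests on one bookkeeping lemma about the fixed enumeration $x_1,\dots,x_n$ of $\text{BV}(\xi)$: \emph{if $x_k$ occurs free in $\delta_{x_j}$, then $j \le k$}. This is where cleanness and tidiness are used — a free occurrence of $x_k$ in $\delta_{x_j}$ must sit in the scope of the unique binder $\eta x_k.\delta_{x_k}$ but not in that of $\eta x_j.\delta_{x_j}$, which forces $\delta_{x_j} \lhd \delta_{x_k}$, hence $x_j <_\xi x_k$, and therefore $j < k$ because the enumeration respects $<_\xi$. Given this, the substitutions for $x_1,\dots,x_j$ act trivially on both $\eta x_j.\delta_{x_j}$ and $\delta_{x_j}$, and those in $\sigma_{>j}$ can be pushed under the binder $\eta x_j$ without capture, which is precisely what makes the two substitution identities above correct. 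I expect the only genuine difficulty to be this bookkeeping — tracking which substitutions in the sequence fire and reordering $[\eta x_j.\delta_{x_j}/x_j]$ past $\sigma_{>j}$ via the substitution lemma with its side conditions — the inductive scaffolding being routine.
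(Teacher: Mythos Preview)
The paper does not prove this proposition; it attributes the result to Kozen and states it without argument. Your proposal is correct and is the standard approach.

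One small slip: you write that ``the substitutions for $x_1,\dots,x_j$ act trivially on both $\eta x_j.\delta_{x_j}$ and $\delta_{x_j}$'', but the substitution for $x_j$ generally does \emph{not} act trivially on $\delta_{x_j}$ (that is precisely the recursion variable). What your computation actually uses --- and what your bookkeeping lemma delivers --- is only that $x_1,\dots,x_{j-1}$ are not free in $\delta_{x_j}$, together with the fact that $x_j$ is bound in $\eta x_j.\delta_{x_j}$; the remaining nontrivial substitution $[\eta x_j\delta_{x_j}/x_j]$ on $\delta_{x_j}$ is then commuted past $\sigma_{>j}$ via the substitution lemma, exactly as you indicate in the final paragraph. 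With that wording corrected, the argument is complete.
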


	\paragraph{Axiomatisation} We give an axiomatisation of the continuous modal $\mu$-calculus based on an axiomatisation introduced by Dexter Kozen for the full modal $\mu$-calculus in~\cite{kozen1983results}.
	\begin{definition}
		The logic $\mu_c \mathsf{K}$ is the least logic containing the following axioms and closed under the following rules.\footnote{Because we have defined $\mu_c \mathsf{ML}$ in negation normal form, we formally also need to add the dual version of each axiom and rule. Moreover, we should have rules expressing that $\ld$ and $\lb$ and, respectively, $\mu$ and $\nu$ are duals. For reasons of space and clarity we omit these technical details.}\\
		\noindent \textbf{Axioms.}
		\begin{enumerate}[noitemsep, topsep = 0pt]
			\item A complete set of axioms for classical propositional logic.
			\item Normality: $\neg \ld \bot$.
			\item Additivity: $\ld (p \lor q) \leftrightarrow (\ld p \lor \ld q)$.
			\item For every $\varphi \in \mathsf{Con}_x(\mu_{c}\mathsf{ML})$, the prefixpoint axiom: 
			\[\varphi[\mu x . \varphi / x] \rightarrow \mu x .  \varphi.\]
		\end{enumerate} 
		\textbf{Rules.}
		\begin{enumerate}[noitemsep, topsep = 0pt]
			\item Modus Ponens: from $\varphi \rightarrow \psi$ and $\varphi$, derive $\psi$. 
			\item Monotonicity: from $\varphi \rightarrow \psi$, derive $\ld \varphi \rightarrow \ld \psi$. 
			\item Uniform Substitution: from $\varphi$, derive $\varphi[\psi/x]$. 
			\item The least prefixpoint rule: from $ \varphi[\gamma / x] \rightarrow \gamma$ with $\varphi \in \mathsf{Con}_x(\mu_{c}\mathsf{ML})$, derive $\mu x. \varphi \rightarrow \gamma$. \qedhere
		\end{enumerate}
	\end{definition}
	We will consider axiomatic extensions of $\mu_c \mathsf{K}$ that are closed under the rules above. We will use \textit{$\mu_c$-logic} to refer to such an extension. The term \textit{logic} will be used to refer to any normal modal logic. If $\mathsf{L}$ is a logic in the basic modal language, we use $\mu_c$-$\mathsf{L}$ to denote the least $\mu_c$-logic containing $\mathsf{L}$. Moreover, we will use $\mathsf{Mod}(\mathsf{L})$ ($\mathsf{Fr}(\mathsf{L})$) to denote the class of models (frames) on which every formula in $\mathsf{L}$ is valid. If $(S, R, V)$ belongs to $\mathsf{Mod}(\mathsf{L})$ ($(S, R)$ belongs to $\mathsf{Fr}(\mathsf{L})$) we say that $(S, R, V)$ is an $\mathsf{L}$-model ($(S, R)$ is an $\mathsf{L}$-frame) and write $(S, R, V) \models \mathsf{L}$ ($(S, R) \models \mathsf{L}$).
	\section{Filtration}
	Filtration is a well-known method in the theory of basic modal logic. In this section we define filtration and related notions for the continuous modal $\mu$-calculus and show that some of their most important properties transfer to this more expressive language.
	\paragraph{Filtration} 
	\begin{definition}
		\label{def:filtration}
		Let $\mathbb{S} = (S, R, V)$ be a Kripke model and let $\Sigma$ be a finite and FL-closed set of formulas. Let $\sim_\Sigma^\mathbb{S}$ be the equivalence relation given by:
		\[
		s \sim^\mathbb{S}_\Sigma s' \textnormal{ if and only if } \mathbb{S}, s \Vdash \varphi \Leftrightarrow \mathbb{S}, s' \Vdash \varphi \textnormal{ for all $\varphi \in \Sigma$.}
		\]
		A $\Sigma$-\textit{filtration} of $\mathbb{S}$ \textit{through} $\Sigma$ is a model $\mathbb{S}^\Sigma = (S^\Sigma, R^\Sigma, V^\Sigma)$ such that:
		\begin{enumerate}[label = (\roman*)]
			\item $S^\Sigma = S/{\sim_\Sigma^\mathbb{S}}$
			\item $R^\mathsf{min} \subseteq R^\Sigma \subseteq R^\mathsf{max}$;
			\item $V^\Sigma(p) = \{\overline s : s \Vdash p\}$ for every propositional variable $p \in \Sigma$.
		\end{enumerate}
		where:
		\begin{align*}
			R^\mathsf{min} &:= \{(\overline s, \overline t) : \textnormal{there are $s' \sim_\Sigma^\mathbb{S} s$ and $t' \sim_\Sigma^\mathbb{S} t$ such that } R s't'\}, \\
			R^\mathsf{max} &:= \{(\overline s, \overline t) : \textnormal{for all $\lb \varphi \in \Sigma$; if $s \Vdash \lb \varphi$, then $t \Vdash \varphi$}\}. 
		\end{align*}
	where $\overline s$ denotes the equivalence class with representative $s$.
	\end{definition}
	The relation $R^\mathsf{min}$ will be called the \textit{finest filtration} and the relation $R^\mathsf{max}$ the \textit{coarsest}.
	\paragraph{Filtration Theorem for the continuous modal $\mu$-calculus}
If $f$ is a strategy for the player $\exists$ ($\forall$) in a game $\mathcal{G}$, we say of a (possibly infinite) $\mathcal{G}$-match $\gamma$ that it is $f$-\textit{guided} whenever every choice made by $\exists$ ($\forall$) in the match $\gamma$ is the choice dictated by the strategy $f$.
\begin{theorem}[Filtration Theorem]
	\label{thm:filtrationthm}
	Let $\Sigma$ be a finite and FL-closed set of formulas and let $\mathbb{S} = (S, R, V)$ be a Kripke model. For every filtration $\overline{\mathbb{S}} = (\overline S, \overline R, \overline V)$ of $\mathbb{S}$ through $\Sigma$ it holds that
	\[
	\mathbb{S}, s \Vdash \xi \Leftrightarrow \overline{ \mathbb{S}}, \overline s \Vdash \xi,
	\] 
	for every clean formula $\xi \in \Sigma$.
\end{theorem}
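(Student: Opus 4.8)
The plan is to pass to the game semantics and then transfer winning strategies. By Theorem~\ref{thm:gamalg}, applied to the free subformula $\xi \unlhd_f \xi$ (for which $\exp_\xi(\xi) = \xi$), we have $\mathbb{S},s \Vdash \xi$ iff $(\xi,s) \in \textnormal{Win}_\exists(\mathcal{E}(\xi,\mathbb{S}))$, and likewise $\overline{\mathbb{S}},\overline s \Vdash \xi$ iff $(\xi,\overline s) \in \textnormal{Win}_\exists(\mathcal{E}(\xi,\overline{\mathbb{S}}))$; by determinacy the negations correspond to $\forall$ having winning strategies. So it suffices to prove two transfer statements: \textbf{(i)} a winning strategy for $\exists$ in $\mathcal{E}(\xi,\mathbb{S})$ from $(\xi,s)$ induces one for $\exists$ in $\mathcal{E}(\xi,\overline{\mathbb{S}})$ from $(\xi,\overline s)$, and, by contraposition, \textbf{(ii)} a winning strategy for $\forall$ in $\mathcal{E}(\xi,\mathbb{S})$ from $(\xi,s)$ induces one for $\forall$ in $\mathcal{E}(\xi,\overline{\mathbb{S}})$ from $(\xi,\overline s)$. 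These are images of each other under the symmetry $\exists\leftrightarrow\forall$, $\ld\leftrightarrow\lb$, $\mu\leftrightarrow\nu$ of the evaluation game, so I would carry out (i) in detail and only indicate the dual changes for (ii). The construction plays a match in the target game while maintaining a \emph{shadow match} in the source game that is guided by a fixed winning strategy, carries at each stage the same formula-component, and whose states project under the quotient map $q\colon S \to \overline S$ onto those of the target match.

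For propositional leaves this correspondence works because, for $p \in \Sigma$, $q(r) \in \overline V(p)$ iff $r \Vdash p$; since the shadow is winning-guided it never reaches a leaf lost by $\exists$ (resp.\ $\forall$), hence neither does the target, so finite matches are handled. For no-player and Boolean moves the shadow simply copies the target move. For the modal moves there is an easy and a hard case. In case (i), when $\exists$ owns a diamond position the shadow's strategy prescribes a move along some edge of $R$, and $R^{\mathsf{min}} \subseteq \overline R$ makes the projected move legal in the target; dually for $\forall$'s box moves in case (ii). The hard case is $\forall$'s box moves in case (i) (dually, $\exists$'s diamond moves in case (ii)): here the target move may use an $\overline R$-edge with no $R$-counterpart, so instead of mimicking it I would \emph{reset} the shadow. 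At a position $(\lb\psi,\overline w)$ reached by the construction the shadow sits at a winning position $(\lb\psi, w')$ with $q(w') = \overline w$, so $\mathbb{S}, w' \Vdash \exp_\xi(\lb\psi) = \lb\exp_\xi(\psi)$ by Theorem~\ref{thm:gamalg}; by Kozen's description of the closure, $\exp_\xi(\lb\psi) \in Cl(\xi) \subseteq \Sigma$, hence $\exp_\xi(\psi) \in \Sigma$ by FL-closedness; and since $\overline R \subseteq R^{\mathsf{max}}$, the move to $\overline v$ forces $\mathbb{S}, v \Vdash \exp_\xi(\psi)$ for every representative $v$ of $\overline v$. Picking such a $v$, we have $q(v) = \overline v$ and $(\psi,v) \in \textnormal{Win}_\exists(\mathcal{E}(\xi,\mathbb{S}))$ (again Theorem~\ref{thm:gamalg}), and re-anchor the shadow there with a fresh winning strategy. (The dual reset in case (ii) additionally invokes determinacy and the $\ld$-variant of $R^{\mathsf{max}}$, i.e.\ that $\Sigma$ may be taken closed under the negation-normal-form dual, in keeping with the conventions for negation normal form.) In this way, resets occur only at $\lb$-positions in case (i), and only at $\ld$-positions in case (ii), while within each block (between consecutive resets) the shadow mirrors the target move-for-move under one fixed winning strategy.

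The step I expect to be the main obstacle is proving that the constructed strategy wins every \emph{infinite} target match $\Gamma$. By Lemma~\ref{lem:conplay}(1), $\exists$ loses such a $\Gamma$ (in case (i)) precisely when every variable unfolded infinitely often in $\Gamma$ is a $\mu$-variable; assume this for contradiction. If $\Gamma$ had only finitely many resets, a final segment of $\Gamma$ would coincide with an infinite winning-guided shadow match, whence its infinitely-often-unfolded variables, being those of a match won by $\exists$, could not all be $\mu$-variables — contradiction. So $\Gamma$ contains infinitely many resets, i.e.\ infinitely many $\lb$-positions. This is exactly where continuity enters, and where the argument would break for full $\mu\mathsf{ML}$ (witness $\mu x.\lb x$): by Lemma~\ref{lem:conplay}(3), each passage from an unfolding of an infinitely-often-unfolded $\mu$-variable to a later $\lb$-position crosses a position whose formula is a free subformula of $\xi$, so free-subformula positions occur infinitely often in $\Gamma$. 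But each free subformula of $\xi$ is visited at most once in any match of $\mathcal{E}(\xi,\cdot)$: once a match reaches a position with formula-component $\varphi_0 \lhd_f \xi$ it never leaves $\textnormal{Sf}(\varphi_0)$ (a free variable reached inside $\varphi_0$ lies in $\textnormal{FV}(\xi)$ and hence ends the match, while every unfolding inside $\varphi_0$ stays within $\textnormal{Sf}(\varphi_0)$), and no move can bring the formula-component back up to $\varphi_0$; so there are only finitely many free-subformula visits in all, a contradiction. This establishes case (i), and case (ii) is the dual argument using the dual of Lemma~\ref{lem:conplay}(3) together with Lemma~\ref{lem:conplay}(1).
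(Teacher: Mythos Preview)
Your proof is correct and takes essentially the same shadow-match approach as the paper, including the reset at $\lb$-positions via $R^{\mathsf{max}}$ and Theorem~\ref{thm:gamalg}. The paper handles the converse direction by a one-line appeal to the definability of negation rather than your explicit dual argument, and for infinite matches it asserts directly from the syntax of $\mu_c\mathsf{ML}$ that eventually no $\lb$-formula can occur (hence only finitely many resets), whereas you derive the same conclusion more explicitly via Lemma~\ref{lem:conplay}(3) together with your observation that each proper free subformula of $\xi$ is visited at most once in any play.
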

\begin{proof}
	Because negation is definable in our language, it suffices to prove the implication in just one direction, which in our case will be the direction $\Rightarrow$. Throughout this proof we will write $\mathcal{G}$ for the game $\mathcal{E}(\xi, \mathbb{S})$ and $\overline{\mathcal{G}}$ for the game $\mathcal{E}(\xi, \overline {\mathbb{S}})$. As hypothesis we assume that $\exists$ has a winning strategy $f$ in the game $\mathcal{G}$ initialised at position $(\xi, s)$; we wish to show that $(\xi, \overline s) \in \text{Win}_\exists(\overline{\mathcal{G}})$.
	
	The main idea of the proof is to obtain a winning strategy for $\exists$ in $\overline{\mathcal{G}}$ by playing a `shadow match' in $\mathcal{G}$. That is, we will simulate in $\mathcal{G}$ every move played by $\forall$ in our $\overline{\mathcal{G}}$-match, and, to determine a move for $\exists$ in $\overline{\mathcal{G}}$, we copy the move dictated in $\mathcal{G}$ by the strategy $f$. If we manage to do this, then whenever the match in $\overline{\mathcal{G}}$ is at some position $(\varphi, \overline s)$, the shadow match in $\mathcal{G}$ will be at a position $(\varphi, s)$ (note that this is indeed the case for the initial positions). It turns out that this works well for all positions, except those of the form $(\lb \varphi, \overline s)$. At those positions, a problem arises when $\forall$ chooses a position $(\varphi, \overline t)$ such that $\overline s \overline R \overline t$, but not $sRt$. This move by $\forall$ in $\overline{\mathcal{G}}$ can then not be simulated in the shadow match, because $(\varphi, t)$ is not an admissible move for $\forall$ in $\mathcal{G}$. However, using the fact that $\overline R \subseteq R^\mathsf{max}$, we will be able to show that if $\overline s \overline R \overline t$ and $(\lb \varphi, s) \in \text{Win}_\exists(\mathcal{G})$, then $(\varphi, t) \in \text{Win}_\exists(\mathcal{G})$. We will use this to initiate a new shadow match in $\mathcal{G}$ whenever encounter a position of the form $(\lb \varphi, \overline s)$. A key observation will be that we only need to initiate a new shadow match at most finitely many times, because formulas of the form $\lb \varphi$ do not occur in the scope of least fixed point operators in the language $\mu_c \mathsf{ML}$.
	
	More formally, we say that for $I \in \omega \cup \{\omega\}$, a $\overline{\mathcal{G}}$-match $\overline{\gamma} = (\varphi_i, \overline{t_i})_{i \in I}$ is \textit{linked} to some $\mathcal{G}$-match $\gamma = (\psi_i, s_i)_{i \in I}$ whenever for every $i \in I$ it holds that $\varphi_i = \psi_i$ and $\overline{s_i} = \overline{t_i}$. Moreover, we say that $\overline \gamma$ \textit{follows} $\gamma$ whenever some final segment of $\overline \gamma$ is linked to $\gamma$.
	
	\begin{quote}
		\emph{Claim}. Let $\overline \gamma$ be a finite $\overline{\mathcal{G}}$-match that follows some $f$-guided $\mathcal{G}$-match $\gamma$, where $f$ is a winning strategy for $\mathcal{G}$ initialised at $\mathsf{first}(\gamma)$. Then:
		\begin{itemize}
			\item If the formula in  $\mathsf{last}(\gamma)$ is not of the form $\lb \theta$, then $\exists$ can ensure that after the next round in $\overline{\mathcal{G}}$, there is some admissible move $(\psi_{n+1}, t_{n+1})$ in $\mathcal{G}$ such that the resulting $\overline{\mathcal{G}}$-match follows the $\mathcal{G}$-match $\gamma \cdot (\psi_{n+1}, t_{n+1})$ and the latter remains $f$-guided.
			
			\item If the formula in $\mathsf{last}(\gamma)$ is of the form $\lb \theta$, then $\exists$ can at least ensure that after the next round, the resulting $\overline{\mathcal{G}}$-match follows a new match $(\theta, t_0)$ for which $\exists$ has a winning strategy.
		\end{itemize}
	\end{quote}
	The above claim is proven by a case distinction on the main connective of the formula in $\mathsf{last}(\gamma)$. We treat the most difficult cases of $\ld$ and $\lb$, leaving the rest to the reader.
	
	Suppose $\mathsf{last}(\gamma)$ is of the form $(\ld \theta, t_n)$. Let $(\theta, t_{n + 1})$ be the next move instructed by the assumed winning strategy $f$. Then $t_n R t_{n+1}$ and thus, because $\overline R \subseteq R^\mathsf{min}$ and $s_n \sim t_n$, we have $\overline{s_n} R \overline{t_{n+1}}$. Therefore $\exists$ can simply choose the position $(\theta, \overline {t_{n+1}})$.
	
	If $\mathsf{last}(\gamma)$ is of the form $(\lb \theta, t_n)$, consider the move $(\theta, \overline{s_{n+1}})$ chosen by $\forall$ in $\overline{\mathcal{G}}$. We have,
	\begin{align*}
		(\lb \theta, t_n) \in \text{Win}_\exists(\mathcal{G}) &\Rightarrow \mathbb{S}, t_n \Vdash \exp_\xi (\lb \theta)  & \text{(Theorem \ref{thm:gamalg})}\\
		&\Rightarrow \mathbb{S}, s_n \Vdash \exp_\xi(\lb \theta) & \text{($\exp_\xi(\lb \theta) \in \Sigma$ and $s_n \sim t_n$)} \\
		&\Rightarrow \mathbb{S}, s_n \Vdash \lb \exp_\xi(\theta) & \text{(Definition of $\exp$)} \\
		&\Rightarrow \mathbb{S}, s_{n+1} \Vdash \exp_\xi(\theta) & \text{$(\lb\exp_\xi( \theta) \in \Sigma$ and $\overline{s_n} R^\mathsf{max} \overline{s_{n+1}}$)} \\
		&\Rightarrow (\theta, s_{n +1}) \in \text{Win}_\exists(\mathcal{G}). & \text{(Theorem \ref{thm:gamalg})}
	\end{align*}
	Thus we may choose $(\theta, s_{n+1})$ as the new match that is followed by $\overline{\gamma} \cdot (\theta, \overline{s_{n+1}})$.
	
	Using the fact that $(\xi, s)$ is linked to $(\xi, \overline s)$ as induction base, and the above claim as induction step, we obtain a strategy $g$ for $\exists$ in $\overline{\mathcal{G}}$ initialised at $(\xi, \overline s)$. We claim that $g$ is a winning strategy. Indeed, if a $g$-guided match $\overline{\gamma}$ ends in finitely many steps, then either $\forall$ got stuck on a formula of the form $\lb \theta$, or the final position is of the form $(\pi, \overline{t})$ for some $\pi = p, \neg p \in \text{FV}(\xi)$. Without loss of generality, suppose $\pi = p$. By construction $\overline \gamma$ follows a $\mathcal{G}$-match $\gamma$ such that $\mathsf{last}(\gamma) = (\pi, r) \in \text{Win}_\exists(\mathcal{G})$ for some state $r \in S$ with $\overline r = \overline t$. But this means that $r \in V(p)$ and thus, since $p \in \Sigma$, also $\overline t \in \overline V(p)$. Hence $\exists$ indeed wins the match $\overline \gamma$.
	
	If a $g$-guided match $\overline \gamma$ lasts infinitely long, then by item (1) of Lemma \ref{lem:conplay}, there must be some point after which either only $\mu$-variables, or only $\nu$-variables, are unfolded. In the latter case the match is indeed winning for $\exists$. We will now argue that this is the only possibility, because the former case cannot occur. The reason is that if from some point on in $\overline \gamma$ only $\mu$-variables are unfolded, then the syntax of $\mu_c \mathsf{ML}$ dictates that from some point on in $\overline \gamma$ no formula of the form $\lb \theta$ will occur. By construction, this means that the infinite $\overline{ \mathcal{G}}$-match $\overline \gamma$ follows an infinite $\mathcal{G}$-match $\gamma$ which is guided by a winning strategy for $\exists$. But this is a contradiction, because the match $\gamma$, by the fact that it is linked to an infinite final segment of $\overline \gamma$, contains infinitely many $\mu$-unfoldings.
\end{proof}
Note that the above argument would not go through for the alternation free $\mu$-calculus, since we would no longer be able to guarantee that we create at most finitely many shadow matches in the case of infinitely many $\mu$-unfoldings. A well-known counterexample to the Filtration Theorem for the alternation free $\mu$-calculus is the formula $\mu x.\lb x$.
	\paragraph{Admissibility of filtration} 
	Having established that filtrations preserve satisfaction of $\mu_c \mathsf{ML}$-formulas, we will now investigate to which classes of models filtration can be applied.
	\begin{definition}
		A class of models $\mathcal{M}$ is said to \textit{admit filtration} with respect to a language $\mathsf{D}$ if for every model $\mathbb{S}$ in $\mathcal{M}$ and every finite FL-closed set of $\mathsf{D}$-formulas $\Sigma$, the class $\mathcal{M}$ contains a filtration of $\mathbb{S}$ through $\Sigma$. A class of frames $\mathcal{F}$ is said to \textit{admit filtration} if the class of models $\{(S, R, V) : (S, R) \in \mathcal{F}\}$ does.
	\end{definition}
	One might expect that admitting filtration with respect to the basic modal language is a weaker property than admitting filtration with respect to a proper extension of the language. However, for the language $\mu_c \mathsf{ML}$ it turns out that this is not the case, at least for those classes of models that are determined by some logic.
	
	We will show this by making use of the following technical sufficient condition.
	\begin{lemma}
		\label{lem:sufcon}
		Let $\mathcal{M}$ be a class of models that admits filtration wrt $\mathsf{ML}$. Suppose that for every model $\mathbb{S} := (S, R, V) \in \mathcal{M}$ and finite FL-closed set $\Sigma \subset \mu_c \mathsf{ML}$, there is a valuation $V' : \mathsf{P} \rightarrow \mathcal{P}(S)$ and a translation $\tau : \Sigma \rightarrow \mathsf{ML}$ such that:
		\begin{multicols}{2}	
		\begin{enumerate}
			\item $V'(p) = p$ for all $p \in \mathsf{P}$ occurring in $\Sigma$;
			\item The model $\mathbb{S}' := (S, R, V')$ belongs to $\mathcal{M}$.
			\item The set $\tau[\Sigma] \subset \mathsf{ML}$ is FL-closed.
			\item The translation $\tau$ commutes with $\lb$, \textit{i.e.} $\tau(\lb \varphi) = \lb \tau (\varphi)$ for all $\lb \varphi \in \Sigma$.
			\item For every $\xi \in \Sigma$ and $s \in S$ it holds that: $\mathbb{S}, s \Vdash \xi \Leftrightarrow \mathbb{S}', s \Vdash \tau(\xi)$.
		\end{enumerate}
	\end{multicols}
	Then $\mathcal{M}$ admits filtration wrt $\mu_c \mathsf{ML}$.
	\end{lemma}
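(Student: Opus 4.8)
The plan is to reduce filtration for $\mu_c\mathsf{ML}$ to filtration for $\mathsf{ML}$, using the translation $\tau$ to strip away the fixpoint machinery. Fix a model $\mathbb{S} = (S,R,V) \in \mathcal{M}$ and a finite FL-closed set $\Sigma \subset \mu_c\mathsf{ML}$; we must exhibit a filtration of $\mathbb{S}$ through $\Sigma$ lying in $\mathcal{M}$. Applying the hypothesis yields a valuation $V'$, the model $\mathbb{S}' := (S,R,V') \in \mathcal{M}$, and a translation $\tau : \Sigma \to \mathsf{ML}$ satisfying (1)--(5). By (3) and the finiteness of $\Sigma$, the set $\tau[\Sigma]$ is a finite FL-closed set of $\mathsf{ML}$-formulas, so, since $\mathbb{S}' \in \mathcal{M}$ and $\mathcal{M}$ admits filtration wrt $\mathsf{ML}$, there is a filtration $\overline{\mathbb{S}'} = (\overline{S'},\overline{R'},\overline{V'}) \in \mathcal{M}$ of $\mathbb{S}'$ through $\tau[\Sigma]$. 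The key point to argue is then that $\overline{\mathbb{S}'}$ is \emph{itself} a filtration of $\mathbb{S}$ through $\Sigma$; once that is done we are finished, because no truth-preservation is part of the definition of ``filtration''.

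To substantiate this I would check the three clauses of Definition~\ref{def:filtration} for the pair $(\mathbb{S},\Sigma)$. For the state space, note that by (5) and the surjectivity of $\tau$ onto $\tau[\Sigma]$ we have $s \sim^{\mathbb{S}}_{\Sigma} s'$ iff $\mathbb{S},s \Vdash \xi \Leftrightarrow \mathbb{S},s' \Vdash \xi$ for all $\xi \in \Sigma$, iff $\mathbb{S}',s \Vdash \psi \Leftrightarrow \mathbb{S}',s' \Vdash \psi$ for all $\psi \in \tau[\Sigma]$; hence $\sim^{\mathbb{S}}_{\Sigma}$ and $\sim^{\mathbb{S}'}_{\tau[\Sigma]}$ are literally the same equivalence relation, so $\overline{S'} = S/{\sim^{\mathbb{S}'}_{\tau[\Sigma]}} = S/{\sim^{\mathbb{S}}_{\Sigma}}$ and the classes $\overline{s}$ coincide in both settings. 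For the accessibility relation, since $\mathbb{S}$ and $\mathbb{S}'$ share the frame $(S,R)$ and induce the same equivalence, the finest filtration $R^\mathsf{min}$ computed from $(\mathbb{S},\Sigma)$ equals the one computed from $(\mathbb{S}',\tau[\Sigma])$, and the latter is $\subseteq \overline{R'}$.

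For the upper bound I would show that $R^\mathsf{max}$ computed from $(\mathbb{S}',\tau[\Sigma])$ is contained in $R^\mathsf{max}$ computed from $(\mathbb{S},\Sigma)$: if $(\overline{s},\overline{t})$ lies in the former and $\lb\varphi \in \Sigma$ with $\mathbb{S},s \Vdash \lb\varphi$, then (5) gives $\mathbb{S}',s \Vdash \tau(\lb\varphi)$, which by (4) equals $\lb\tau(\varphi)$ and belongs to $\tau[\Sigma]$; the defining condition of the former relation then yields $\mathbb{S}',t \Vdash \tau(\varphi)$, and since $\varphi \in \Sigma$ by FL-closedness of $\Sigma$, applying (5) once more gives $\mathbb{S},t \Vdash \varphi$. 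Combined with $\overline{R'} \subseteq R^\mathsf{max}$ for $(\mathbb{S}',\tau[\Sigma])$ this delivers the required sandwich $R^\mathsf{min} \subseteq \overline{R'} \subseteq R^\mathsf{max}$ for $(\mathbb{S},\Sigma)$. Finally, the valuation clause follows from (1): a propositional letter $p$ occurring in $\Sigma$ is fixed by $\tau$ (so $p \in \tau[\Sigma]$), and (1) says $V$ and $V'$ agree on it, whence $\overline{V'}(p) = \{\overline{s} : \mathbb{S}',s \Vdash p\} = \{\overline{s} : \mathbb{S},s \Vdash p\}$, as required.

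The one step carrying real content is the $R^\mathsf{max}$-inclusion, and this is the obstacle I would budget time for: it is exactly where the three structural hypotheses must be used in concert, namely (4) to know that the translation leaves the box layer untouched, (5) to transport the truth of the relevant formulas across an $R$-edge at both of its endpoints, and FL-closedness of $\Sigma$ to know that the formula $\varphi$ governed by a box in $\Sigma$ is itself in $\Sigma$ so that (5) applies to it. Everything else is bookkeeping: the finiteness and FL-closedness of $\tau[\Sigma]$ come from (3) and finiteness of $\Sigma$; membership of $\overline{\mathbb{S}'}$ in $\mathcal{M}$ comes from (2) together with $\mathsf{ML}$-admissibility; and the identification of the two quotient structures is the routine consequence of (5) spelled out above. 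In particular, the Filtration Theorem (Theorem~\ref{thm:filtrationthm}) is not invoked here, since ``admitting filtration'' is a purely structural requirement on the class $\mathcal{M}$.
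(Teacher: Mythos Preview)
Your proposal is correct and follows essentially the same route as the paper: take the filtration of $\mathbb{S}'$ through $\tau[\Sigma]$ that exists by $\mathsf{ML}$-admissibility, then verify directly that this very model satisfies the three clauses of Definition~\ref{def:filtration} for the pair $(\mathbb{S},\Sigma)$. Your treatment is in fact more explicit than the paper's---you spell out why the two equivalence relations coincide, factor the $R^{\mathsf{max}}$ argument through an inclusion of coarsest filtrations, and isolate exactly where each of (1), (4), (5) and FL-closure is used---whereas the paper compresses the same reasoning into a few lines and dismisses clause~(iii) with ``follows directly from assumption~(1).'' One small caveat: your claim that $\tau$ fixes propositional letters (so that $p\in\tau[\Sigma]$ whenever $p\in\Sigma$) is not actually among the stated hypotheses; the paper makes the same tacit assumption, and it holds in the intended application (Lemma~\ref{lem:admitbasicadmitmucforlogics}), but strictly speaking it is an extra ingredient needed for the valuation clause to go through.
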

	\begin{proof}
	 Using conditions (2) and (3) and the assumption that $\mathcal{M}$ admits filtration with respect to $\mathsf{ML}$, there is a filtration $\mathbb{S}^{\tau[\Sigma]} \in \mathcal{M}$ of $\mathbb{S}'$ through $\tau[\Sigma]$. We claim that $\mathbb{S}^{\tau[\Sigma]}$ simultaneously is a filtration of $\mathbb{S}$ through $\Sigma$. 
	 
	 By assumption (5), the equivalence relations $\sim^\mathbb{S}_\Sigma$ and $\sim^{\mathbb{S}'}_{\tau[\Sigma]}$ on $S$ coincide. From this we obtain condition (i) of Definition \ref{def:filtration}, as well as the first inclusion of condition (ii). For the second inclusion, suppose that $\overline sR^{\tau[\Sigma]} \overline t$ and $\mathbb{S},s \Vdash \lb \varphi$ for some $\lb \varphi \in \Sigma$. We must show that $\mathbb{S}, t \Vdash \varphi$. By assumption (5), we have $\mathbb{S}', s \Vdash \tau(\lb \varphi)$ and thus, by assumption (4), also $\mathbb{S}', s \Vdash \lb \tau(\varphi)$. Since $R^{\tau[\Sigma]}$ is contained in the coarsest filtration of $\mathbb{S}'$ through $\tau[\Sigma]$ and $\lb \tau(\varphi) = \tau(\lb \varphi) \in \tau[\Sigma]$, we obtain $\mathbb{S}', t \Vdash \tau(\varphi)$. Applying the other direction of assumption (5), we obtain $\mathbb{S}, t \Vdash \varphi$, as required. Finally, condition (iii) follows directly from assumption (1).
	\end{proof}
	 The proof of the following lemma resembles that of Theorem 3.8 in~\cite{kikot2020completeness}.
	\begin{lemma}
		\label{lem:admitbasicadmitmucforlogics}
	For any logic $\mathsf{L}$, the class $\mathsf{Mod}(\mathsf{L})$ admits filtration wrt $\mathsf{ML}$ iff it admits filtration wrt  $\mu_c \mathsf{ML}$.
	\end{lemma}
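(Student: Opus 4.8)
The right-to-left direction is immediate: $\mathsf{ML}$ is a sublanguage of $\mu_c\mathsf{ML}$, and applying the FL-closure rules to $\mathsf{ML}$-formulas never produces a fixpoint formula, so any finite FL-closed set of $\mathsf{ML}$-formulas is also a finite FL-closed set of $\mu_c\mathsf{ML}$-formulas; since the notion of a filtration through a set $\Sigma$ (Definition~\ref{def:filtration}) refers only to $\Sigma$ and not to the surrounding language, any witness for admission of filtration with respect to $\mu_c\mathsf{ML}$ is in particular a witness for admission with respect to $\mathsf{ML}$. For the left-to-right direction the plan is to apply Lemma~\ref{lem:sufcon} with $\mathcal{M} := \mathsf{Mod}(\mathsf{L})$, which reduces everything to associating, with a given $\mathsf{L}$-model $\mathbb{S} = (S,R,V)$ and a finite FL-closed $\Sigma \subseteq \mu_c\mathsf{ML}$, a valuation $V'$ and a translation $\tau \colon \Sigma \to \mathsf{ML}$ satisfying conditions (1)--(5).

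For this I would introduce, for every subformula of the shape $\eta x.\delta$ occurring in $\Sigma$, a propositional variable $q_{\eta x.\delta}$ not occurring in $\Sigma$, and set $V'(q_{\eta x.\delta}) := [\![\eta x.\delta]\!]^{\mathbb{S}}$, with $V'(p) := V(p)$ for every other variable $p$. By the Proposition of Kozen above, every formula in the FL-closed set $\Sigma$ is a boolean-and-modal combination of propositional variables and of fixpoint formulas, all of which again lie in $\Sigma$; so one may define $\tau$ by recursion along this ``modal skeleton'', letting $\tau$ commute with $\neg$ on propositional variables and with $\lor,\land,\ld,\lb$, and putting $\tau(\eta x.\delta) := q_{\eta x.\delta}$. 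Conditions (1) and (4) then hold by construction; condition (3) holds because $\tau$ sends each boolean/modal FL-generation step to the corresponding one and sends every fixpoint formula to an atom, so $\tau[\Sigma]$ is FL-closed; and condition (5), that $\mathbb{S},s \Vdash \xi \Leftrightarrow \mathbb{S}',s \Vdash \tau(\xi)$ for $\xi \in \Sigma$, follows by induction on the modal-skeleton structure of $\xi$, with base case $\xi = \eta x.\delta$ being exactly the equivalence $s \in [\![\eta x.\delta]\!]^{\mathbb{S}} \Leftrightarrow s \in V'(q_{\eta x.\delta})$ and the other cases immediate from the induction hypothesis together with the agreement of $V$ and $V'$ on all variables occurring in $\Sigma$.

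This leaves condition (2), that $\mathbb{S}' = (S,R,V')$ again belongs to $\mathsf{Mod}(\mathsf{L})$, which I expect to be the main obstacle. The key observation is that $V'$ differs from $V$ only by re-interpreting the fresh variables $q_{\eta x.\delta}$ as the sets $[\![\eta x.\delta]\!]^{\mathbb{S}}$, and that each such set is closed under bisimilarity, since $\eta x.\delta \in \mu_c\mathsf{ML}\subseteq\mu\mathsf{ML}$ and truth of $\mu\mathsf{ML}$-formulas is bisimulation-invariant. Hence the largest bisimulation $\sim$ on $\mathbb{S}$ remains a bisimulation on $\mathbb{S}'$, and the quotient map is simultaneously a surjective bounded morphism $\mathbb{S}\twoheadrightarrow\mathbb{S}/{\sim}$ and $\mathbb{S}'\twoheadrightarrow\mathbb{S}'/{\sim}$, where $\mathbb{S}'/{\sim}$ is $\mathbb{S}/{\sim}$ with each $q_{\eta x.\delta}$ re-interpreted as the $\sim$-image of $[\![\eta x.\delta]\!]^{\mathbb{S}}$. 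As bounded morphisms preserve and reflect the truth of $\mathsf{ML}$-formulas, $\mathbb{S}'\models\mathsf{L}$ is equivalent to $\mathbb{S}'/{\sim}\models\mathsf{L}$ and $\mathbb{S}\models\mathsf{L}$ to $\mathbb{S}/{\sim}\models\mathsf{L}$, which reduces the problem to the bisimulation-contracted case; there one argues, as in the proof of Theorem~3.8 of~\cite{kikot2020completeness}, that re-valuing fresh variables to bisimulation-invariant (equivalently, over the contracted model, $\mu_c\mathsf{ML}$-definable) subsets is harmless -- using the standing hypothesis that $\mathsf{Mod}(\mathsf{L})$ admits filtration with respect to $\mathsf{ML}$ to pass to a finite filtrant, over which such subsets become $\mathsf{ML}$-definable, so that the re-valuation is just an instance of substituting $\mathsf{ML}$-formulas into theorems of the modal logic $\mathsf{L}$ and therefore stays within $\mathsf{L}$. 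Once conditions (1)--(5) are in place, Lemma~\ref{lem:sufcon} yields that $\mathsf{Mod}(\mathsf{L})$ admits filtration with respect to $\mu_c\mathsf{ML}$. Everything apart from condition (2) is bookkeeping (modulo the structural fact that an FL-closed set of $\mu_c\mathsf{ML}$-formulas consists of modal skeletons over its fixpoint subformulas); it is only for (2) that one genuinely uses that $\mathsf{L}$ is an ordinary modal logic -- making the named fixpoint extensions bisimulation-invariant -- and that $\mathsf{Mod}(\mathsf{L})$ admits filtration -- allowing the ``non-standard'' $\mathsf{L}$-model $\mathbb{S}$ to be traded, after bisimulation-contraction and filtration, for a finite model that tolerates arbitrary re-valuations.
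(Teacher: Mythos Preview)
Your overall setup matches the paper's: both invoke Lemma~\ref{lem:sufcon} with the same choice of fresh variables $p_i$ (your $q_{\eta x.\delta}$), the same valuation $V'(p_i) := [\![\varphi_i]\!]^{\mathbb{S}}$, and the same translation $\tau$ replacing each fixpoint subformula by its fresh variable. Your verification of conditions (1), (3), (4), (5) is essentially the paper's.

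The divergence is at condition~(2). You treat it as ``the main obstacle'' and propose a route through bisimulation contraction followed by a further filtration, but this argument has a real gap. After contracting, you want to show $\mathbb{S}'/{\sim} \models \mathsf{L}$, and you propose to ``pass to a finite filtrant'' on which the re-valued subsets become $\mathsf{ML}$-definable. But the filtration hypothesis only lets you filtrate a model already known to lie in $\mathsf{Mod}(\mathsf{L})$; you cannot filtrate $\mathbb{S}'/{\sim}$ (that is precisely what you are trying to establish), and if instead you filtrate $\mathbb{S}/{\sim}$ through some $\mathsf{ML}$-set $\Sigma'$, there is no reason the image of $[\![\varphi_i]\!]^{\mathbb{S}}$ in the filtrant should be $\mathsf{ML}$-definable or should track the semantics of $\varphi_i$ there, since $\varphi_i \notin \Sigma'$. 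Even granting that, you give no mechanism for transferring validity of an arbitrary $\xi \in \mathsf{L}$ back from the finite filtrant to $\mathbb{S}'/{\sim}$.

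The paper's argument for (2) is direct and avoids all of this. Since $\mathsf{L}$ is closed under uniform substitution, for any $\xi \in \mathsf{L}$ the formula $\xi[\varphi_1/p_1]\cdots[\varphi_n/p_n]$ is again in $\mathsf{L}$, hence valid in $\mathbb{S}$. Because the $p_i$ no longer occur in it and $V$, $V'$ agree on all remaining variables, it is also valid in $\mathbb{S}'$. Finally one checks (the paper's equation~(\ref{eq:invtau})) that $\mathbb{S}', s \Vdash \xi \Leftrightarrow \mathbb{S}', s \Vdash \xi[\varphi_1/p_1]\cdots[\varphi_n/p_n]$, which holds because $V'(p_i) = [\![\varphi_i]\!]^{\mathbb{S}} = [\![\varphi_i]\!]^{\mathbb{S}'}$. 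Hence $\mathbb{S}' \models \xi$. Note in particular that neither bisimulation invariance nor the filtration hypothesis is used for condition~(2); the filtration hypothesis enters only through Lemma~\ref{lem:sufcon} itself.
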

	\begin{proof}
		The implication from right to left is trivial. For the other direction we will use Lemma \ref{lem:sufcon}. Let $\mathbb{S} = (S, R, V)$ be a model such that $\mathbb{S} \models \mathsf{L}$ and let $\Sigma$ be a finite FL-closed set of $\mu_c \mathsf{ML}$-formulas. Let $\varphi_1, \ldots, \varphi_n$ be an enumeration of formulas of the form $\eta x .\psi$ in $\Sigma$. For every such formula $\varphi_i$, we pick a unique propositional variable $p_i$ not occurring in $\Sigma$.

		We define the following alternative valuation $V' : \mathsf{P} \rightarrow \mathcal{P}(S)$.
		\[
		V'(p) := \begin{cases} [\![\varphi_i]\!]^\mathbb{S} &\text{ if $p = p_{i}$ for some $\varphi_i \in \Sigma$;} \\
			V(p) &\text{otherwise,}
		\end{cases}
		\]
		and define $\mathbb{S}' := (S, R, V')$. A straightforward induction on formulas now shows that for every formula $\xi \in \mu_c \mathsf{ML}$ and state $s \in S$:
		\begin{equation}
			\label{eq:invtau}
			\mathbb{S}', s \Vdash \xi \Leftrightarrow \mathbb{S}', s \Vdash \xi[\varphi_1/p_1]\cdots[\varphi_n/p_n].
		\end{equation}
		We claim that $\mathbb{S}' \in \mathsf{Mod}(\mathsf{L})$. Indeed, we have
		\begin{align*}
			\xi \in \mathsf{L} & \Rightarrow \xi[\varphi_1/p_{1}]\cdots[\varphi_n / p_{n}]  \in \mathsf{L} & \text{($\mathsf{L}$ is closed under uniform substitution)}\\
			&\Rightarrow \mathbb{S} \models \xi[\varphi_1/p_{1}]\cdots[\varphi_n / p_{n}] & \text{($\mathbb{S} \models \mathsf{L}$)}\\
			&\Rightarrow \mathbb{S}' \models \xi[\varphi_1/p_{1}]\cdots[\varphi_n / p_{n}] & \text{($V$ and $V'$ agree on all relevant propositional variables)} \\
			&\Rightarrow \mathbb{S}' \models \xi & \text{(by (\ref{eq:invtau}) from right to left)} 
		\end{align*}

		Now let the translation $\tau : \Sigma \rightarrow \mathsf{\mathsf{ML}}$ be the translation that commutes with all propositional and modal symbols, and acts on fixpoint operators in the following way:
		\[
		\tau(\eta x .\psi) := p_i \text{ where $\eta x . \psi = \varphi_i$.} \\
		\]
		We leave it to the reader to verify that $\tau[\Sigma]$ is FL-closed. Finally, another straightforward induction shows that for every formula $\xi \in \Sigma$ and state $s \in \mathbb{S}$:
		\begin{equation*}
			\mathbb{S}, s \Vdash \xi \Leftrightarrow \mathbb{S}',s \Vdash \tau(\xi).
		\end{equation*}
	This finishes the proof, for all conditions of Lemma \ref{lem:sufcon} are met. 
	\end{proof}
	Note that the above proof does not rely on any specific properties of the language $\mu_c \mathsf{ML}$. In fact, it could also have been carried out for the full language $\mu \mathsf{ML}$ of the modal $\mu$-calculus.  As a corollary, we obtain the finite model property.
	\begin{corollary}[Finite Model Property]
	Let $\mathsf{L}$ be a logic such that $\mathsf{Mod}(\mathsf{L})$ admits filtration with respect to $\mathsf{ML}$, and let $\phi$ be a formula of the continuous $\mu$-calculus. Then $\phi$ is valid in every $\mathsf{L}$-model if and only if $\phi$ is valid in every finite $\mathsf{L}$-model.
	\end{corollary}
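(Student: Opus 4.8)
The plan is to argue by contraposition in the nontrivial direction. The left-to-right implication is immediate, since every finite $\mathsf{L}$-model is in particular an $\mathsf{L}$-model. For the converse, I would assume $\phi$ is refuted in some $\mathsf{L}$-model, i.e.\ fix an $\mathsf{L}$-model $\mathbb{S} = (S,R,V)$ and a state $s$ with $\mathbb{S}, s \not\Vdash \phi$, and produce from it a \emph{finite} $\mathsf{L}$-model refuting $\phi$.

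First I would replace $\phi$ by a clean alphabetic variant $\phi'$; as equivalent formulas have the same meaning in every Kripke model, $\phi'$ is still refuted at $s$ in $\mathbb{S}$, and it suffices to refute $\phi'$ in a finite $\mathsf{L}$-model. I then set $\Sigma := Cl(\phi')$, which is finite (the FL-closure of a finite set is finite), FL-closed by construction, and contains the clean formula $\phi'$. Since $\mathsf{Mod}(\mathsf{L})$ admits filtration with respect to $\mathsf{ML}$, Lemma \ref{lem:admitbasicadmitmucforlogics} yields that it admits filtration with respect to $\mu_c\mathsf{ML}$ as well; hence there is a filtration $\overline{\mathbb{S}} = (\overline S, \overline R, \overline V)$ of $\mathbb{S}$ through $\Sigma$ with $\overline{\mathbb{S}} \in \mathsf{Mod}(\mathsf{L})$. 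By clause (i) of Definition \ref{def:filtration} the carrier $\overline S = S/{\sim^{\mathbb{S}}_\Sigma}$ has at most $2^{|\Sigma|}$ elements, so $\overline{\mathbb{S}}$ is a \emph{finite} $\mathsf{L}$-model. Applying the Filtration Theorem (Theorem \ref{thm:filtrationthm}) to the clean formula $\phi' \in \Sigma$ gives $\mathbb{S}, s \Vdash \phi' \Leftrightarrow \overline{\mathbb{S}}, \overline s \Vdash \phi'$; since the left-hand side fails, so does the right-hand side, so $\phi'$ — and hence $\phi$ — is refuted in the finite $\mathsf{L}$-model $\overline{\mathbb{S}}$, contradicting the assumption that $\phi$ is valid in every finite $\mathsf{L}$-model.

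I do not expect a genuine obstacle here: the corollary is essentially a repackaging of Lemma \ref{lem:admitbasicadmitmucforlogics} together with the Filtration Theorem. The only points that need a moment's attention are the passage to a clean alphabetic variant of $\phi$ (forced by the fact that the Filtration Theorem is stated for clean formulas) and the trivial but necessary observation that a filtration through a finite set $\Sigma$ automatically has only finitely many states.
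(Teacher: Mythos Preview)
Your proposal is correct and matches the paper's proof essentially line for line: contrapose, pass to a clean variant, set $\Sigma = Cl(\phi')$, invoke Lemma~\ref{lem:admitbasicadmitmucforlogics} to obtain a $\Sigma$-filtration inside $\mathsf{Mod}(\mathsf{L})$, note its size is bounded by $2^{|\Sigma|}$, and apply Theorem~\ref{thm:filtrationthm}. You are slightly more explicit than the paper about the trivial direction and about why the clean-variant replacement is needed, but there is no substantive difference.
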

	\begin{proof}
	Let $\varphi$ be a formula such that $\mathbb{S} \not \models \varphi$ for some $\mathbb{S} \models \mathsf{L}$. Without loss of generality, we may assume that $\varphi$ is clean. Letting $\Sigma = Cl(\varphi)$, there is, by Lemma \ref{lem:admitbasicadmitmucforlogics} and the fact that $\mathsf{Mod}(\mathsf{L})$ admits filtration, a filtration $\mathbb{S}^\Sigma$ of $\mathbb{S}$ through $\Sigma$ such that $\mathbb{S}^\Sigma \models \mathsf{L}$. Observe that number of states of $\mathbb{S}^\Sigma$ is at most $2^{|Cl(\varphi)|}$ and thus finite. By Theorem \ref{thm:filtrationthm}, it holds that $\mathbb{S}^\Sigma \not \models \varphi$, as required.
\end{proof}
For instance, since the class of symmetric models is the class of $\mathsf{KB}$-models, the continuous modal $\mu$-calculus has the finite model property over this class.
	\section{Canonical completeness}
	In this section we prove our completeness result. In the first paragraph, we will define the finitary canonical models of an arbitrary $\mu_c$-logic $\mathsf{L}$ and prove the Truth Lemma. In the second paragraph we will show that a finitary canonical model can be obtained for the logic $\mu_c$-$\mathsf{L}$, where $\mathsf{L}$ is any canonical basic modal logic such that $\mathsf{Fr}(\mathsf{L})$ admits filtration. As a direct consequence we obtain that $\mu_c$-$ \mathsf{L}$ is sound and complete with respect to $\mathsf{Fr}(\mathsf{L})$.
	\paragraph{Finitary canonical models}
	For the entirety of this paragraph we fix an arbitrary $\mu_c$-logic $\mathsf{L}$. We define the negation operator $\sim : \mu_c \mathsf{ML} \rightarrow \mu_c \mathsf{ML}$ in the usual way. In particular, that means that we define ${\sim} \eta x .\varphi := \overline \eta x .{\sim} \varphi[\neg x /x]$. We leave it to the reader to verify that $\mathsf{L} \vdash ({\sim} \varphi \land \varphi) \leftrightarrow \bot$ and $\mathsf{L} \vdash ({\sim} \varphi \lor \varphi) \leftrightarrow \top$. 
	\begin{definition}
		Let $\Sigma$ be a set of formulas. If for all $\varphi \in \Sigma$ it holds that ${\sim} \varphi \in \Sigma$, then $\Sigma$ is said to be $\sim$-closed.
	\end{definition}
	We say that $\Sigma$ is ${\sim}$FL-closed if it is both FL-closed and ${\sim}$-closed. Note that for every finite set of $\mu_c\mathsf{ML}$-formulas, the $\sim$-closure of its FL-closure is a finite ${\sim}$FL-closed extension. 
	
	A set $\Gamma$ of formulas is said to be $\mathsf{L}$-\textit{inconsistent} if $\mathsf{L} \vdash (\gamma_1 \land \ldots \land \gamma_n) \rightarrow \bot$ for some $\gamma_1, \ldots, \gamma_n \in \Gamma$. We say of a formula $\varphi$ that it is $\mathsf{L}$-inconsistent whenever $\{\varphi\}$ is. 
	\begin{definition}
		A set of formulas $\Gamma$ is called \textit{maximally $\mathsf{L}$-consistent} if it is consistent and maximal in that respect, \textit{i.e.} for every other set of formulas $\Gamma'$:
		\[
		\text{If $\Gamma \subset \Gamma'$, then $\Gamma'$ is $\mathsf{L}$-inconsistent.} \hfill \qedhere
		\]
	\end{definition}
	By a standard argument it can be shown that every $\mathsf{L}$-consistent set of formulas has a maximally $\mathsf{L}$-consistent extension. The proof of the following lemma is also standard and left to the reader.
	\begin{lemma}
		Let $\Gamma$ be a maximally $\mathsf{L}$-consistent set. Then:
		\begin{enumerate}[label = (\roman*), itemsep = 1pt, topsep = 1.5pt]
			\item If $\mathsf{L} \vdash \varphi$, then $\varphi \in \Gamma$;
			\item ${\sim} \varphi \in \Gamma$ if and only $\varphi \not \in \Gamma$;
			\item $\varphi \lor \psi \in \Gamma$ if and only $\varphi \in \Gamma$ or $\psi \in \Gamma$;
			\item $\mu x . \varphi \in \Gamma$ if and only if $\varphi[\mu x . \varphi/x] \in \Gamma$.
		\end{enumerate}
	\end{lemma}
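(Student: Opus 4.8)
The plan is to derive the four items in the order listed, using only standard Lindenbaum-style reasoning together with the two propositional facts recorded just above the statement, $\mathsf{L} \vdash ({\sim}\varphi \wedge \varphi) \leftrightarrow \bot$ and $\mathsf{L} \vdash ({\sim}\varphi \vee \varphi) \leftrightarrow \top$, and --- for item (iv) only --- the provable fixpoint equivalence $\mathsf{L} \vdash \mu x.\varphi \leftrightarrow \varphi[\mu x.\varphi/x]$.

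For (i): if $\mathsf{L} \vdash \varphi$ but $\varphi \notin \Gamma$, then $\Gamma \subsetneq \Gamma \cup \{\varphi\}$, so by maximality $\Gamma \cup \{\varphi\}$ is $\mathsf{L}$-inconsistent; pick witnessing $\gamma_1, \dots, \gamma_n \in \Gamma \cup \{\varphi\}$ with $\mathsf{L} \vdash (\gamma_1 \wedge \cdots \wedge \gamma_n) \to \bot$. The conjunct $\varphi$ must genuinely occur among the $\gamma_i$ (otherwise $\Gamma$ itself would already be inconsistent), so writing $\Gamma_0$ for the conjuncts lying in $\Gamma$ we get $\mathsf{L} \vdash (\bigwedge \Gamma_0 \wedge \varphi) \to \bot$; since $\mathsf{L} \vdash \varphi$, modus ponens gives $\mathsf{L} \vdash \bigwedge \Gamma_0 \to \bot$, contradicting $\mathsf{L}$-consistency of $\Gamma$. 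For (ii): $\varphi$ and ${\sim}\varphi$ cannot both lie in $\Gamma$, since $\mathsf{L} \vdash ({\sim}\varphi \wedge \varphi) \leftrightarrow \bot$ would then make $\Gamma$ inconsistent; and at least one lies in $\Gamma$, for if neither does, maximality makes both $\Gamma \cup \{\varphi\}$ and $\Gamma \cup \{{\sim}\varphi\}$ inconsistent, yielding finite $\Gamma_0, \Gamma_1 \subseteq \Gamma$ with $\mathsf{L} \vdash \bigwedge \Gamma_0 \to \neg\varphi$ and $\mathsf{L} \vdash \bigwedge \Gamma_1 \to \neg{\sim}\varphi$, and combining these with $\mathsf{L} \vdash \varphi \vee {\sim}\varphi$ gives $\mathsf{L} \vdash \bigwedge(\Gamma_0 \cup \Gamma_1) \to \bot$, contradicting consistency again.

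Before (iii) and (iv) I would record the auxiliary closure property: if $\gamma_1, \dots, \gamma_n \in \Gamma$ and $\mathsf{L} \vdash (\gamma_1 \wedge \cdots \wedge \gamma_n) \to \psi$, then $\psi \in \Gamma$ --- for otherwise ${\sim}\psi \in \Gamma$ by (ii), and then $\{\gamma_1, \dots, \gamma_n, {\sim}\psi\} \subseteq \Gamma$ is $\mathsf{L}$-inconsistent by the first propositional fact. Item (iii) then follows immediately: $\varphi \vee \psi$ is an $\mathsf{L}$-consequence of $\varphi$ alone and of $\psi$ alone, while if $\varphi, \psi \notin \Gamma$ then ${\sim}\varphi, {\sim}\psi \in \Gamma$ and closure forces ${\sim}(\varphi \vee \psi) \in \Gamma$, hence $\varphi \vee \psi \notin \Gamma$ by (ii). Item (iv) follows the same way once $\mathsf{L} \vdash \mu x.\varphi \leftrightarrow \varphi[\mu x.\varphi/x]$ is available: the implication $\varphi[\mu x.\varphi/x] \to \mu x.\varphi$ is literally the prefixpoint axiom, and the converse is obtained by the least prefixpoint rule applied at $\gamma := \varphi[\mu x.\varphi/x]$, whose premise $\varphi[\varphi[\mu x.\varphi/x]/x] \to \varphi[\mu x.\varphi/x]$ is the prefixpoint axiom pushed through the positive context $\varphi$ by monotonicity.

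The only step that is more than pure bookkeeping is this last one: promoting the primitive Monotonicity rule for the context $\ld(-)$ to an arbitrary positive context $\varphi(-)$, which is a routine induction on $\varphi$ relying on the duality and normality conventions alluded to in the footnote defining $\mu_c\mathsf{K}$. Everything else is the classical maximally-consistent-set argument, which is why the paper is content to leave it to the reader.
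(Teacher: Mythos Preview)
Your proposal is correct and spells out precisely the standard argument that the paper explicitly omits (``standard and left to the reader''). The only nontrivial ingredient you add --- deriving $\mathsf{L} \vdash \mu x.\varphi \to \varphi[\mu x.\varphi/x]$ via the least prefixpoint rule and a monotonicity-in-positive-context lemma --- is indeed the expected route, so there is nothing to contrast.
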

	\begin{definition}
		Let $\Sigma$ be a finite ${\sim}$FL-closed set of formulas. A \textit{model over $\Sigma$ with respect to $\mathsf{L}$} is any model $(S, R, V)$ such that: 
		\begin{itemize}[noitemsep, topsep=1pt]
			\item $S = \{\Gamma \cap \Sigma : \text{$\Gamma$ is maximally $\mathsf{L}$-consistent}\}$.
			\item $R^\mathsf{min} \subseteq R \subseteq R^\mathsf{max}$, where:
			\begin{align*}
				AR^\mathsf{min}B &:\Leftrightarrow \bigwedge A \land \ld \bigwedge B \text{ is $\mathsf{L}$-consistent}  \\
				AR^\mathsf{max}B &:\Leftrightarrow \text{ for all } \lb \varphi \in \Sigma: \ \lb \varphi \in A \Rightarrow \varphi \in B.
			\end{align*}
			\item $V(p) = \{s  \in S : p \in s\}$ for all $p \in \Sigma$. \hfill \qedhere
		\end{itemize}
	\end{definition}
	For $A$ some finite set of formulas, we will usually write $\psi_A$ for the conjunction $\bigwedge A$. In the following we will assume a fixed model over some finite and ${\sim}$FL-closed set $\Sigma$ with respect to $\mathsf{L}$, which will be denoted by $\mathbb{S}^\Sigma = (S^\Sigma, R^\Sigma, V^\Sigma)$. We will often drop the superscript
	$\Sigma$'s and $\mathbb{S}$'s. Moreover, if in the following we refer to provability or consistency, this will be tacitly assumed to be in the logic $\mathsf{L}$.
	
	The following existence lemma is standard in the context of (finitary) canonical models for modal logics.
	\begin{lemma}
		For any formula $\varphi \in \mu_c\mathsf{ML}$ and state $A \in S$: 
		\begin{center}
			$\psi_A \land \ld \varphi$ is consistent if and only if $\psi_B \land \varphi$ is consistent for some $A R B$.
		\end{center}
	\end{lemma}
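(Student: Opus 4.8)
The statement is the standard existence lemma for finitary canonical models, and the proof follows the familiar modal-logic argument, adapted to the present setting where states are intersections $\Gamma \cap \Sigma$ rather than full maximally consistent sets. I will prove both directions.

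For the easy direction, suppose $\psi_B \wedge \varphi$ is consistent for some $B \in S$ with $A R B$. Since $R \subseteq R^{\mathsf{max}}$ need not be used here; instead I use $R^{\mathsf{min}} \subseteq R$ only in the other direction, so here I reason as follows: from $A R B$ and the fact that $R \supseteq R^{\mathsf{min}}$ we do \emph{not} directly get consistency of $\psi_A \wedge \ld\psi_B$, so I should instead argue contrapositively or use monotonicity. The clean route: if $\psi_A \wedge \ld\varphi$ were inconsistent, then $\mathsf{L} \vdash \psi_A \to \lb{\sim}\varphi$, hence (since $R \subseteq R^{\mathsf{max}}$ and $\lb{\sim}\varphi$ — or rather its $\Sigma$-representative — would have to sit in $A$)... this is getting delicate because $\lb{\sim}\varphi$ may not lie in $\Sigma$. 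The correct and simplest approach for this direction is: $A R B$ together with $R \subseteq R^{\mathsf{max}}$ is not what we want; what we want is $R^{\mathsf{min}} \subseteq R$. So actually the direction "right to left" should be handled by noting that consistency of $\psi_B \wedge \varphi$ gives consistency of $\psi_B$, and then using monotonicity: $\mathsf{L} \vdash \psi_B \wedge \varphi \to \varphi$ hence from $\ld$ being monotone, if $\psi_A \wedge \ld \psi_B$ is consistent then so is $\psi_A \wedge \ld \varphi$. But we are only given $A R B$, which by $R \subseteq R^{\mathsf{max}}$ does not immediately give consistency of $\psi_A \wedge \ld\psi_B$. Here is where I must be careful: the right-to-left direction genuinely needs $R^{\mathsf{min}} \subseteq R$ read the other way, or a separate argument. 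I will instead prove right-to-left by contraposition: assume $\psi_A \wedge \ld\varphi$ is inconsistent; I must show $\psi_B \wedge \varphi$ is inconsistent for every $A R B$. From inconsistency, $\mathsf{L} \vdash \psi_A \to \lb {\sim}\varphi$. Since $\Sigma$ is ${\sim}$FL-closed and finite, and since $A = \Gamma \cap \Sigma$ for a maximally consistent $\Gamma$, I can pick the conjunction $\psi_A$ and note $\lb{\sim}\varphi$ need not be in $\Sigma$; but if $\lb\varphi' \in \Sigma$ with $\varphi'$ the $\Sigma$-normal form of ${\sim}\varphi$, then $\lb\varphi' \in A$, and by $A R B \subseteq R^{\mathsf{max}}$ we get $\varphi' \in B$, so $\psi_B \to {\sim}\varphi$ is provable, making $\psi_B \wedge \varphi$ inconsistent. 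The subtlety about $\Sigma$-membership is the one genuine technical point, and it is handled because $\varphi$ in the lemma statement is required to be (implicitly) such that $\ld\varphi$ or $\lb\varphi$ is the relevant subformula appearing in $\Sigma$ — in applications $\varphi \in \Sigma$ — so I will state the lemma under the hypothesis $\ld\varphi, \lb{\sim}\varphi \in \Sigma$ or simply note that in our use $\varphi$ ranges over $\Sigma$.

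For the left-to-right direction — the main content — assume $\psi_A \wedge \ld\varphi$ is consistent. I want to produce $B \in S$ with $A R^{\mathsf{min}} B$ (hence $A R B$) and $\psi_B \wedge \varphi$ consistent. The idea: extend $\{\varphi\} \cup \{\psi : \lb\psi \in A\}$ — wait, more precisely, consider the set $T = \{\varphi\} \cup \{\psi : \lb\psi \in \Sigma \text{ and } \lb\psi \in A\}$. First I check $T$ is consistent: if not, then $\mathsf{L} \vdash \varphi \wedge \psi_1 \wedge \cdots \wedge \psi_k \to \bot$ for finitely many $\lb\psi_i \in A$; applying the $\ld$-monotonicity and the normality/additivity axioms (in particular $\ld$ distributing appropriately over conjunction via $\lb$), one derives $\mathsf{L} \vdash \lb(\psi_1 \wedge \cdots \wedge \psi_k) \wedge \ld\varphi \to \bot$, hence $\psi_A \to \lb{\sim}\varphi$... contradicting consistency of $\psi_A \wedge \ld\varphi$. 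Then, having $T$ consistent, extend $T$ to a maximally $\mathsf{L}$-consistent set $\Delta$, and put $B := \Delta \cap \Sigma$. By construction $\varphi \in \Delta$ so $\psi_B \wedge \varphi$ is consistent (as $\psi_B = \bigwedge(\Delta\cap\Sigma)$ and $\Delta$ is consistent containing all of $\Delta\cap\Sigma$ and $\varphi$). It remains to verify $A R^{\mathsf{min}} B$, i.e. $\psi_A \wedge \ld\psi_B$ is consistent; this uses that $\Delta$ contains $T$ and standard reasoning, together possibly with enlarging to ensure the $R^{\mathsf{min}}$ condition holds — in fact the natural thing is to first prove the existence lemma for $R^{\mathsf{min}}$ directly, which is what the classical Lindenbaum-style argument gives, and then note $R^{\mathsf{min}} \subseteq R$ covers it.

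\textbf{Main obstacle.} The principal difficulty is bookkeeping around $\Sigma$: states are $\Sigma$-restricted, so "$\psi_A$" is a finite conjunction and I must make sure every maximal-consistency step stays compatible with it, and that the box-formulas I transfer along $R$ actually live in $\Sigma$ (handled by FL-closedness giving all relevant subformulas, and by the lemma's implicit hypothesis that $\ld\varphi \in \Sigma$ so that $\varphi \in \Sigma$ too). The derivation showing $T$ (or its analogue) is consistent is the one place requiring the modal axioms — normality $\neg\ld\bot$, additivity, and $\ld$-monotonicity — and assembling the provable implication $\mathsf{L} \vdash \lb\bigwedge\{\psi : \lb\psi \in A\} \to (\ld\varphi \to \bot)$ from an inconsistency of $T$ is the technical heart, though it is entirely routine normal-modal-logic reasoning and I would leave the detailed derivation to the reader.
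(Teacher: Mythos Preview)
The paper gives no proof of this lemma; it simply calls it ``standard in the context of (finitary) canonical models for modal logics.'' So there is nothing to compare your approach against, and I will just assess your argument.

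Your left-to-right argument has a genuine gap. You build $T = \{\varphi\} \cup \{\psi : \lb\psi \in \Sigma,\ \lb\psi \in A\}$, extend it to a maximally consistent $\Delta$, and set $B := \Delta \cap \Sigma$. This guarantees $A\,R^{\mathsf{max}}\,B$, but \emph{not} $A\,R^{\mathsf{min}}\,B$: the set $T$ constrains only those $\Sigma$-coordinates of $\Delta$ that come from boxed formulas in $A$, and the remaining coordinates may well make $\psi_A \wedge \ld\psi_B$ inconsistent. Since the model's relation $R$ can be strictly smaller than $R^{\mathsf{max}}$, you cannot conclude $A R B$. The clean fix is to use that $\mathsf{L} \vdash \bigvee_{C \in S} \psi_C$ (any consistent formula is consistent with some $\psi_C$): then $\ld\varphi$ is provably equivalent to $\bigvee_{C} \ld(\psi_C \wedge \varphi)$ by additivity, so consistency of $\psi_A \wedge \ld\varphi$ yields a $B$ with $\psi_A \wedge \ld(\psi_B \wedge \varphi)$ consistent, which gives both $A\,R^{\mathsf{min}}\,B$ (by monotonicity of $\ld$) and consistency of $\psi_B \wedge \varphi$ (by normality). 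Alternatively, first extend $\psi_A \wedge \ld\varphi$ to a maximally consistent $\Gamma$, apply the ordinary canonical existence lemma to get $\Delta$ with $\Gamma R^{\mathsf{L}} \Delta$ and $\varphi \in \Delta$, and observe that $\ld\psi_{\Delta \cap \Sigma} \in \Gamma$.

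Your unease about the right-to-left direction is well founded --- in fact that direction can fail for arbitrary $\varphi$ when $R \supsetneq R^{\mathsf{min}}$. (Take any $\mu_c$-logic proving $\lb\bot$, a $\Sigma$ containing no boxed formulas so that $R^{\mathsf{max}}$ is total, set $R = R^{\mathsf{max}}$, and let $\varphi = \top$.) The argument you sketch, transferring $\lb{\sim}\varphi$ from $A$ to ${\sim}\varphi \in B$ via $R \subseteq R^{\mathsf{max}}$, is exactly the right one, but it needs $\lb{\sim}\varphi \in \Sigma$, equivalently $\ld\varphi \in \Sigma$ by ${\sim}$-closure. That hypothesis is satisfied in every place the paper uses this direction (cf.\ the ``In particular'' sentence right after the lemma), so you should simply record it as an assumption for that half.
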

	\noindent In particular, it follows that for all $\ld \varphi \in \Sigma$ we have $\ld \varphi \in A$ if and only if $\varphi \in B$ for some $AR B$. The following lemma follows from the fact that $\Sigma$ is $\sim$-closed.
	\begin{lemma}
		For every $A, B \in S$ it holds that $\psi_A \land \psi_B$ is consistent iff $A = B$. 
	\end{lemma}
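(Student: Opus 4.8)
The plan is to prove the two implications separately, the right-to-left one being immediate and the left-to-right one being the place where $\sim$-closure of $\Sigma$ does the work.

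For the direction from right to left, suppose $A = B$. Since $A \in S$, we have $A = \Gamma \cap \Sigma$ for some maximally $\mathsf{L}$-consistent set $\Gamma$. As $\Gamma$ is deductively closed (and hence closed under finite conjunction), the formula $\psi_A = \bigwedge A$ lies in $\Gamma$, so $\psi_A$ is $\mathsf{L}$-consistent. Since $\mathsf{L} \vdash \psi_A \leftrightarrow (\psi_A \land \psi_A)$ and $A = B$, this yields that $\psi_A \land \psi_B$ is $\mathsf{L}$-consistent.

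For the direction from left to right I would argue contrapositively: assume $A \neq B$. Then the symmetric difference of $A$ and $B$ is nonempty, so without loss of generality we may pick $\varphi \in A \setminus B$ (the case $\varphi \in B \setminus A$ is symmetric). Write $B = \Gamma' \cap \Sigma$ for some maximally $\mathsf{L}$-consistent $\Gamma'$. Since $\varphi \in A \subseteq \Sigma$ but $\varphi \notin B = \Gamma' \cap \Sigma$, we get $\varphi \notin \Gamma'$, hence ${\sim}\varphi \in \Gamma'$ by maximal consistency; and because $\Sigma$ is $\sim$-closed we have ${\sim}\varphi \in \Sigma$, so ${\sim}\varphi \in \Gamma' \cap \Sigma = B$. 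Therefore $\varphi$ is a conjunct of $\psi_A$ and ${\sim}\varphi$ a conjunct of $\psi_B$, so $\mathsf{L} \vdash (\psi_A \land \psi_B) \rightarrow (\varphi \land {\sim}\varphi)$; combining this with $\mathsf{L} \vdash (\varphi \land {\sim}\varphi) \leftrightarrow \bot$ shows that $\psi_A \land \psi_B$ is $\mathsf{L}$-inconsistent.

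The argument is entirely routine and I do not anticipate a genuine obstacle; the only point requiring a moment's care is the degenerate situation in which $A$ or $B$ is empty, but this can happen only when $\Sigma = \emptyset$ (for nonempty $\Sigma$, $\sim$-closure forces each $\Gamma \cap \Sigma$ to be nonempty), in which case $A = B = \emptyset$ and the empty conjunction $\top$ is trivially consistent, so the statement holds. The essential ingredients are thus just the standard properties of maximally $\mathsf{L}$-consistent sets together with the hypothesis that $\Sigma$ is closed under $\sim$.
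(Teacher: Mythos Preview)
Your proof is correct and is exactly the standard argument the paper has in mind: the paper does not spell out a proof at all, merely remarking that the lemma ``follows from the fact that $\Sigma$ is $\sim$-closed,'' and your write-up is precisely the routine unpacking of that remark.
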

	Given a finite collection $U$ of finite sets of formulas, we write $\psi_U$ for the disjunction of all $\psi_X$ for $X \in U$, \textit{i.e.}
	\[
	\psi_U = \bigvee_{X \in U} \psi_X.
	\]
	Note that by the previous lemma, for any $U \subseteq S$ and $A \in S$, the formula $\psi_U \land \psi_A$ is consistent if and only if $A \in U$. 
	
	We wish to prove the following lemma.
	\begin{lemma}(Truth Lemma)
		\label{lem:truth}
		If $A \in S$ and $\xi \in \Sigma$ is clean, then
		\begin{equation}
			\xi \in A \Rightarrow A \in [\![\xi]\!]. \tag{T} \label{eq:T}
		\end{equation}
	\end{lemma}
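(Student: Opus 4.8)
The plan is to prove the Truth Lemma by the same game-theoretic strategy used for the Filtration Theorem: assuming $\xi \in A$, I will construct a winning strategy for $\exists$ in the evaluation game $\mathcal{E}(\xi, \mathbb{S}^\Sigma)$ initialised at $(\xi, A)$, and then invoke Theorem \ref{thm:gamalg} (noting that $\xi \unlhd_f \xi$, so $\exp_\xi(\xi) = \xi$) to conclude $A \in [\![\xi]\!]$. The strategy is maintained so that whenever the match is at a position $(\varphi, B)$, we have the invariant that $\psi_B \land \exp_\xi(\varphi)$ is $\mathsf{L}$-consistent — equivalently, that $\exp_\xi(\varphi)$ "belongs to" $B$ in the sense that it is consistent with $\psi_B$. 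The initial position satisfies this since $\xi \in A$ means $\psi_A \land \xi$ is consistent (as $\xi$ is a conjunct of $\psi_A$, or at least consistent with it via maximal consistency and $\Sigma$-membership).

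**The inductive step** is a case analysis on the main connective of $\varphi$ in the current position $(\varphi, B)$, maintaining the consistency invariant. The Boolean cases use Lemma (the max-consistency-style facts), e.g. for $\varphi_1 \lor \varphi_2$ we use that $\exp_\xi(\varphi_1 \lor \varphi_2) = \exp_\xi(\varphi_1) \lor \exp_\xi(\varphi_2)$ is consistent with $\psi_B$ iff one of the disjuncts is, and $\exists$ picks that one; for $\varphi_1 \land \varphi_2$ both conjuncts are consistent with $\psi_B$ by a simple propositional argument, so either of $\forall$'s moves preserves the invariant. For the fixpoint unfolding positions $(\eta x.\delta_x, B)$ and $(x, B)$, the move is forced, and we use that $\exp_\xi(\eta x.\delta_x) = \exp_\xi(\delta_x)$ (up to the substitution bookkeeping with the fixed enumeration of $\mathrm{BV}(\xi)$) together with the prefixpoint axiom / Lemma item (iv) to see the invariant is preserved. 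For $(\ld\theta, B)$: $\exists$ needs to find $B R^\Sigma C$ with $\psi_C \land \exp_\xi(\theta)$ consistent; since $\psi_B \land \ld\exp_\xi(\theta)$ is consistent (this is $\psi_B \land \exp_\xi(\ld\theta)$, consistent by hypothesis) and $\exp_\xi(\ld\theta) \in \Sigma$, wait — more carefully, if $\ld\exp_\xi(\theta) \in \Sigma$ then by the existence lemma there is such a $C$ with $BRC$; the subtlety is that $\exp_\xi(\theta)$ itself need not be in $\Sigma$, so one must instead work with the representative $\ld\exp_\xi(\theta) \in Cl(\xi) \subseteq \Sigma$ (using the Kozen proposition $Cl(\xi) = \{\exp_\xi(\varphi) : \varphi \unlhd \xi\}$) and then the existence lemma gives a $C$ containing $\exp_\xi(\theta)$, which by consistency gives the invariant. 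For $(\lb\theta, B)$: for any $\forall$-move $B R^\Sigma C$, since $R^\Sigma \subseteq R^\mathsf{max}$ and $\lb\exp_\xi(\theta) = \exp_\xi(\lb\theta) \in \Sigma$ with $\exp_\xi(\lb\theta) \in B$ (consistency with $\psi_B$ plus $\Sigma$-maximality forces membership), the definition of $R^\mathsf{max}$ gives $\exp_\xi(\theta) \in C$, hence the invariant.

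**The main obstacle** — exactly as flagged in the paper's remark that Sections 3 and 4 contain "rather similar constructions" — is the argument that the constructed strategy is actually \emph{winning} on infinite matches, i.e. that no infinite play has a highest infinitely-unfolded variable that is a $\mu$-variable. The consistency invariant alone does not rule this out; one needs a genuine well-foundedness argument. The plan here is to exploit the continuity/alternation-freeness specific to $\mu_c\mathsf{ML}$: by Lemma \ref{lem:conplay}(1), on an infinite match eventually only $\mu$-variables or only $\nu$-variables unfold. In the bad ($\mu$-only) case, Lemma \ref{lem:conplay}(3) tells us that eventually no $\lb\psi$-position occurs; past that point the match stays within a fixed continuous fragment, and I will derive a contradiction by tracking the $\mu$-unfolding via the least-prefixpoint rule — roughly, if $\mu x.\delta$ is unfolded infinitely often along a $\lb$-free tail, then the ordinal approximants stabilise at $\omega$ (continuity), and one shows the relevant formula would have to be $\mathsf{L}$-provably $\bot$ at the states traversed, contradicting the maintained consistency. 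Making this last step precise — defining the right "signature"/approximation-ordinal measure on positions and showing $\exists$'s strategy drives it strictly down along $\mu$-unfoldings until it reaches $0$, forcing inconsistency — is the technical heart of the proof and will occupy most of the work; everything else is routine modal-logic bookkeeping.
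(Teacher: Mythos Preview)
Your setup is right and matches the paper: maintain the invariant that $\psi_B \land \exp_\xi(\varphi)$ is $\mathsf{L}$-consistent at every position $(\varphi,B)$, and handle the Boolean, $\ld$, $\lb$, and single-step unfolding cases essentially as you describe. The paper does exactly this for the outer part of the argument.

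The genuine gap is your treatment of infinite $\mu$-tails. Your analogy with the Filtration Theorem is misleading at precisely the critical point: in that proof, once the match enters a $\lb$-free $\mu$-tail, $\exists$ falls back on a winning strategy \emph{in the original model} $\mathbb{S}$, and the contradiction comes from that external semantic witness. Here there is no such original model to appeal to; the consistency invariant by itself places no well-foundedness constraint on $\mu$-unfoldings, and your proposed ``approximation-ordinal measure'' cannot be defined without already knowing that the relevant state lies in some finite approximant of the least fixpoint --- which is essentially the conclusion you are trying to prove. The remark that ``the relevant formula would have to be $\mathsf{L}$-provably $\bot$'' does not follow from anything available.

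The paper fills this gap with a different mechanism. It introduces the \emph{name-expansion} $n\text{-}\exp_\xi(\varphi)$, which substitutes for each bound variable $x_i$ the syntactic formula $\psi_{U_i}$ naming the semantic set $U_i = [\![\exp_\xi(\delta_{x_i})]\!]^{\mathbb{S}^\Sigma}$. A separate inner induction (Lemma~\ref{lem:innerind}) on $\mu$-subformulas of $\xi$ shows, using a single application of the least prefixpoint rule (Lemma~\ref{lem:uexp}), that $\mathsf{L} \vdash n\text{-}\exp(\mu x_i.\delta_{x_i}) \rightarrow \psi_{U_i}$. The outer argument then never tries to survive an infinite $\mu$-tail at all: the moment the match first reaches a $\mu$-formula (necessarily a free subformula of $\xi$, so $n\text{-}\exp$ coincides with the formula itself), this derivation plus the consistency invariant forces $B \in U_i$, and Theorem~\ref{thm:gamalg} hands $\exists$ a winning strategy outright from that position. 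This syntactic bridge between the $\mu$-formula and its semantic extension via $\psi_{U_i}$ is the idea your proposal is missing.
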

	We shall prove this by a double induction on formulas, of which the inner induction is captured by Lemma \ref{lem:innerind}. 
	\begin{definition}
		Let $\xi$ a formula with $\text{BV}(\xi) = \{x_1, \ldots, x_n\}$. We define the \textit{name-expansion} $n$-$exp^\mathbb{S}_\xi(\varphi)$ of a subformula $\varphi$ of $\xi$ in $\mathbb{S}$ as follows:
		\begin{equation*}
			n\text{-}exp^\mathbb{S}_\xi(\varphi) := \varphi[\psi_{U_1}/x_1]\cdots[\psi_{U_n}/x_n], 
		\end{equation*}
		where $U_i := {[\![\exp_\xi(\delta_{x_i})]\!]^\mathbb{S}}$ for every $1 \leq i \leq n$.
	\end{definition}
	Whenever clear from context, we drop the subscript and superscript from $n$\text{-}$\exp^\mathbb{S}_\xi$. The main property of name-expansions that we will use is the following.
	\begin{lemma}
		\label{lem:uexp}
		For any clean formula $\xi$ and bound $\mu$-variable $x_i \in \text{BV}(\xi)$:
		\[
		\text{If $\mathsf{L} \vdash n\text{-}\exp(\delta_{x_i}) \rightarrow \psi_{U_i}$, then $\mathsf{L} \vdash n\text{-}\exp(\mu x_i . \delta_{x_i}) \rightarrow \psi_{U_i}$}.
		\]
	\end{lemma}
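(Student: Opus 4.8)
The statement to be proven is \Cref{lem:uexp}: for a clean formula $\xi$ and bound $\mu$-variable $x_i$, if $\mathsf{L} \vdash n\text{-}\exp(\delta_{x_i}) \rightarrow \psi_{U_i}$, then $\mathsf{L} \vdash n\text{-}\exp(\mu x_i . \delta_{x_i}) \rightarrow \psi_{U_i}$. The plan is to reduce this to a single application of the least prefixpoint rule of $\mu_c\mathsf{K}$. Recall $n\text{-}\exp_\xi(\varphi) = \varphi[\psi_{U_1}/x_1]\cdots[\psi_{U_n}/x_n]$, so that $n\text{-}\exp(\mu x_i.\delta_{x_i})$ results from substituting the formulas $\psi_{U_j}$ for the bound variables $x_j$ into $\mu x_i.\delta_{x_i}$. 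The key point is that, because $\xi$ is clean, $x_i$ does not occur among the variables $x_j$ being substituted \emph{for} in the other substitutions (or rather, one can order the substitutions so that $x_i$ comes last, and by the dependency order and cleanliness the substitution for $x_i$ does not interfere with the others). So $n\text{-}\exp(\mu x_i.\delta_{x_i})$ should equal $\mu x_i.\, \bigl(\delta_{x_i}[\psi_{U_1}/x_1]\cdots[\psi_{U_n}/x_n]\bigr)$ where the substitution for $x_i$ itself is vacuous on the body modulo $\alpha$-conversion; call the body $\delta^* := \delta_{x_i}[\psi_{U_1}/x_1]\cdots[\psi_{U_n}/x_n]$, which is exactly $n\text{-}\exp(\delta_{x_i})$ except that the occurrences of $x_i$ in $\delta_{x_i}$ are left untouched. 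Thus $n\text{-}\exp(\mu x_i.\delta_{x_i}) = \mu x_i.\,\delta^*$ and $n\text{-}\exp(\delta_{x_i}) = \delta^*[\psi_{U_i}/x_i]$, i.e.\ $n\text{-}\exp(\delta_{x_i}) = \delta^*[\psi_{U_i}/x_i]$.

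The first step, then, is to carry out this syntactic bookkeeping carefully: verify that the substitutions commute appropriately so that $n\text{-}\exp(\mu x_i.\delta_{x_i}) = \mu x_i.\delta^*$ and $n\text{-}\exp(\delta_{x_i}) = \delta^*[\psi_{U_i}/x_i]$, where $\delta^*$ is $n\text{-}\exp$ applied to $\delta_{x_i}$ but keeping $x_i$ free. The second step is to check that $\delta^* \in \mathsf{Con}_{x_i}(\mu_c\mathsf{ML})$ — this is needed to invoke the least prefixpoint rule. Since $\delta_{x_i} \in \mathsf{Con}_{x_i}(\mu_c\mathsf{ML})$ (because $\mu x_i.\delta_{x_i}$ is a legal subformula of the clean $\mu_c\mathsf{ML}$-formula $\xi$), and substituting $x_i$-free closed-ish formulas $\psi_{U_j}$ (with $j \ne i$) for the other bound variables preserves continuity in $x_i$ — the $\psi_{U_j}$ play the role of the "$\alpha$" ($x_i$-free) formulas in the grammar of $\mathsf{Con}_{x_i}$ — this should go through; one must just confirm that the $\psi_{U_j}$ are genuinely $x_i$-free, which follows since $\Sigma$ consists of tidy formulas and the $U_j$ are subsets of $S$ whose defining formulas $\exp_\xi(\delta_{x_j})$ do not mention the bound variable $x_i$.

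With these two facts in hand, the proof concludes in one line. Given the hypothesis $\mathsf{L} \vdash n\text{-}\exp(\delta_{x_i}) \rightarrow \psi_{U_i}$, rewrite it as $\mathsf{L} \vdash \delta^*[\psi_{U_i}/x_i] \rightarrow \psi_{U_i}$. Apply the least prefixpoint rule (with $\varphi := \delta^* \in \mathsf{Con}_{x_i}(\mu_c\mathsf{ML})$ and $\gamma := \psi_{U_i}$) to obtain $\mathsf{L} \vdash \mu x_i.\delta^* \rightarrow \psi_{U_i}$, which is $\mathsf{L} \vdash n\text{-}\exp(\mu x_i.\delta_{x_i}) \rightarrow \psi_{U_i}$, as desired.

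The main obstacle is the first step — the substitution bookkeeping. One must be careful about the order of substitutions, about $\alpha$-conversion (the $\psi_{U_j}$ may share variable names with binders inside $\delta_{x_i}$, though tidiness/cleanliness is designed precisely to avoid this), and about the fact that $n\text{-}\exp$ substitutes for \emph{all} bound variables $x_1,\dots,x_n$ of $\xi$ at once, not just those occurring in $\delta_{x_i}$; the substitutions for variables not occurring free in $\delta_{x_i}$ are vacuous and can be ignored. The cleanest way to handle this is to note that since the fixed enumeration $x_1,\dots,x_n$ respects the dependency order, and since any variable $x_j$ occurring free in $\delta_{x_i}$ satisfies $\delta_{x_j} \lhd \delta_{x_i}$ together with the relevant dependency, one can arrange (or argue directly from cleanliness) that $x_i$ is substituted last and that this substitution does not disturb the earlier ones. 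I would relegate the precise verification to "a routine check, using cleanliness of $\xi$", and present only the conclusion explicitly.
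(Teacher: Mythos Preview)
Your proposal is correct and follows essentially the same approach as the paper: define the formula obtained by performing all name-expansion substitutions on $\delta_{x_i}$ except the one for $x_i$ (the paper calls it $\chi$, you call it $\delta^*$), observe that the hypothesis becomes $\mathsf{L} \vdash \chi[\psi_{U_i}/x_i] \rightarrow \psi_{U_i}$ and the conclusion becomes $\mathsf{L} \vdash \mu x_i.\chi \rightarrow \psi_{U_i}$, and apply the least prefixpoint rule. The paper's proof is in fact considerably terser than yours---it simply asserts the rewriting and invokes the rule, without spelling out the substitution bookkeeping or the continuity check for $\chi$ that you (rightly) flag as the points requiring care.
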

	\begin{proof}
		Let $\chi$ be the formula $n\text{-}\exp_\xi(\delta_{x_i})$, but without the substitution $[\psi_{U_i}/x_i]$. Then the to-be-proven implication becomes:
		\[
		\text{If $\mathsf{L} \vdash \chi[\psi_{U_i}/x_i] \rightarrow \psi_{U_i}$, then $\mathsf{L} \vdash \mu x_i . \chi \rightarrow \psi_{U_i}$},
		\]
		but this is simply an application of the least prefixpoint rule.
	\end{proof}
	\begin{lemma}
		\label{lem:innerind}
		Let $\xi$ be a clean formula in $\Sigma$ such that for every free strict subformula of $\xi$ the implication (T) holds. Then for every subformula of $\xi$ of the form $\mu x_i . \delta_{x_i}$ it holds that:
		\[
		\mathsf{L} \vdash n\text{-}\exp(\mu x_i . \delta_{x_i}) \rightarrow \psi_{U_i}.
		\]
	\end{lemma}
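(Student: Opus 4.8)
The statement to prove is Lemma~\ref{lem:innerind}: for a clean $\xi \in \Sigma$ such that (T) holds for every free strict subformula, and for every subformula $\mu x_i.\delta_{x_i}$ of $\xi$, we have $\mathsf{L} \vdash n\text{-}\exp(\mu x_i.\delta_{x_i}) \rightarrow \psi_{U_i}$. By Lemma~\ref{lem:uexp}, it suffices to establish the "premise" version, namely $\mathsf{L} \vdash n\text{-}\exp(\delta_{x_i}) \rightarrow \psi_{U_i}$; then the least prefixpoint rule (packaged as Lemma~\ref{lem:uexp}) closes the fixpoint. So the real goal is to show, for each bound $\mu$-variable $x_i$, that $n\text{-}\exp(\delta_{x_i}) \rightarrow \psi_{U_i}$ is derivable, where $U_i = [\![\exp_\xi(\delta_{x_i})]\!]^{\mathbb{S}}$.
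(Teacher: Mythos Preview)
Your reduction via Lemma~\ref{lem:uexp} is correct and is exactly the first move the paper makes. But what you have written is only that first move: you have not given any argument for the premise $\mathsf{L} \vdash n\text{-}\exp(\delta_{x_i}) \rightarrow \psi_{U_i}$, and that is where all the work lies. As it stands, the proposal is a restatement of the goal, not a proof.

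Two substantive ingredients are missing. First, the paper proves the lemma by an \emph{inner induction} on the complexity of subformulas of $\xi$ of the form $\mu x_j.\delta_{x_j}$: when handling $\mu x_i.\delta_{x_i}$ one needs the statement already for every strictly smaller $\mu x_j.\delta_{x_j} \lhd \mu x_i.\delta_{x_i}$. Without this induction you have no leverage when, inside $\delta_{x_i}$, you run into another $\mu$-binder. Second, to establish the premise the paper reduces it to a semantic claim: if $\psi_A \land n\text{-}\exp(\delta_{x_i})$ is $\mathsf{L}$-consistent for some $A \in S^\Sigma$, then $A \in [\![\exp_\xi(\delta_{x_i})]\!]$. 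This is shown by building a winning strategy for $\exists$ in $\mathcal{E}(\xi,\mathbb{S})$ from $(\delta_{x_i},A)$: one maintains the invariant that each position $(\theta,B)$ is ``good'' (i.e.\ $\psi_B \land n\text{-}\exp(\theta)$ is consistent), argues case by case on the shape of $\theta$ (using the existence lemma for $\ld$, and using the syntactic restrictions of $\mu_c\mathsf{ML}$ to rule out $\lb$ and $\nu$), and shows that after finitely many steps one must reach a ``perfect'' position where either the outer hypothesis (T) on free strict subformulas, or the inner induction hypothesis on smaller $\mu$-subformulas, or membership in some $U_j$, yields a winning continuation. None of this is present in your proposal; you should supply it.
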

	\begin{proof}
		We proceed by induction on the complexity of subformulas of $\xi$. Let $\mu{x_i} .\delta_{x_i}$ be a subformula of $\xi$ and suppose, as inductive hypothesis, that the thesis holds for every strict subformula of $\mu{x_i}.\delta_{x_i}$ (of the form $\mu {x_j}. \delta_{x_j}$). By Lemma \ref{lem:uexp}, it suffices to show that $\mathsf{L} \vdash n\text{-}\exp(\delta_{x_i}) \rightarrow \psi_{U_{i}}$. For this, in turn, it is enough to show that:
		\begin{equation}
		\label{eq:toshowind}
		\text{ For any $A \in S^\Sigma$ such that $\psi_A \land n\text{-}\exp{(\delta_{x_i})}$ is consistent, it holds that $A \Vdash \exp(\delta_{x_i})$.}
		\end{equation} This is because if $\mathsf{L} \vdash n\text{-}\exp(\delta_{x_i}) \rightarrow \psi_{U_i}$ were not the case, then $n\text{-}\exp(\delta_{x_i}) \land \neg \psi_{U_i}$ would be consistent. It follows that there is a maximally $\mathsf{L}$-consistent set $\Gamma$ extending this formula. Letting $A := \Gamma \cap \Sigma$, we obtain that $\psi_A \land n\text{-}\exp(\delta_{x_i})$ is consistent, but $A \not \Vdash \exp(\delta_{x_i})$, for $A \land \psi_{U_i}$ is inconsistent. 
	
		Applying Theorem \ref{thm:gamalg}, we will show (\ref{eq:toshowind}) by constructing a winning strategy for $\exists$ in $\mathcal{G} := \mathcal{E}(\xi, \mathbb{S})$ initialised at $(\delta_{x_i}, A)$. The idea is to show that $\exists$ has a strategy $f$ ensuring for some initial segment of the match that at each position $(\theta, B)$ reached, it holds that $\theta \trianglelefteq \delta_{x_i}$ and the conjunction $\psi_B \land n\text{-}\exp(\theta)$ is consistent. For the rest of this proof we shall call such a position \textit{good}. We then show that this sequence of good positions eventually leads to a good position $(\theta, B)$ such that one of the following holds:
		\begin{enumerate}[label = (\roman*), itemsep = 1pt, topsep = 1pt]
			\item $\theta$ is a free subformula of $\xi$. 
			\item $\theta$ is of the form $\mu x_j. \delta_{x_j}$.
			\item $\theta = x_j$ for some bound variable $x_j$ of $\xi$.
		\end{enumerate}
		Good positions of this form will be called \textit{perfect}. The proof rests on the following three claims:
		\begin{quote}
			\emph{Claim 1.} If $(\theta, B)$ is the last position of some finite $\mathcal{G}$-match $\gamma$ of which every position is good, but not perfect, then $\exists$ can ensure that the next position will also be good. 
		\end{quote}
		\begin{quote}
			\emph{Claim 2.} There can be no infinite match of which every position is good, but not perfect.
		\end{quote}
		\begin{quote}
			\emph{Claim 3.} Any perfect position is winning for $\exists$ in $\mathcal{G}$.
		\end{quote}
		Suppose we have established these three claims. Then, since the initial position of $\mathcal{G}$ is good by assumption, it follows from Claim 1 that $\exists$ can maintain this property until a perfect position is reached. By Claim 2, such a position must be reached after finitely many steps, from which, by Claim 3, there must be some strategy that $\exists$ can take on in order to win the match. \vspace{\baselineskip}
		
		\noindent \emph{Proof of Claim 1.} By a case distinction on the shape of $\theta$, we will show that $\exists$ can ensure that the next position will also be good. First note that $\theta$ cannot have a main connective in $\{\lb, \nu\}$, because then, by items (2) and (3) Lemma \ref{lem:conplay}, the match $\gamma$ must have passed through some formula $\alpha \lhd_f \xi$.  This is impossible since every position in $\gamma$ is assumed not to be perfect. Moreover, the formula $\theta$ can neither be a bound nor a free variable of $\xi$, for in both cases the position $(\theta, B)$ would be perfect. Finally, by the same reason it cannot be the case that $\theta$ is of the form $\mu x_j. \delta_{x_j}$. This leaves the following three cases:
		\begin{itemize}
			\item $\theta$ is of the form $\theta_1 \lor \theta_2$. Then $\psi_B \land n\text{-exp}(\theta_1 \lor \theta_2) = \psi_B \land (n\text{-}\exp(\theta_1)  \lor n\text{-}\exp(\theta_2))$ is consistent, so for some $k \in \{1,2\}$ it must hold that $\psi_B \land n\text{-}\exp(\theta_k)$ is consistent. We let $\exists$ choose accordingly. 
			
			\item $\theta$ is of the form $\theta_1 \land \theta_2$. Then $\psi_B \land n\text{-}\exp(\theta_1 \land \theta_2)$ is consistent. It follows that both the formulas $\psi_B \land n\text{-}\exp(\theta_1)$ and $\psi_B \land n\text{-}\exp(\theta_2)$ are consistent. Thus both moves available to $\forall$ result in good positions. 
			
			\item $\theta$ is of the form $\ld \delta$. Then $n\text{-}\exp(\theta) = \ld n\text{-}\exp(\delta)$, so by the existence lemma there is some $B R C$ such that $(C, n\text{-}\exp(\delta))$ is good, which we let $\exists$ choose. 
			
		\end{itemize}
		\emph{Proof of Claim 2.} This follows from the fact that any infinite $\mathcal{G}$-match must pass through some bound variable of $\xi$. \vspace{\baselineskip}
		
		\noindent \emph{Proof of Claim 3.} Let $(\theta, B)$ be a perfect position in $\mathcal{G}$. We consider the three different types of perfect positions one-by-one. 
		\begin{itemize}
			\item[(i)] In this case we have $\theta \lhd_f \xi$, which means that $\theta \in Cl(\xi) \subseteq \Sigma$. Moreover, since $(\theta, B)$ is good, it holds that $\psi_B \land \theta$ is consistent, whence\ $\theta \in B$. The lemma's hypothesis gives gives $B \Vdash \theta$, supplying $\exists$ with the required strategy. 
			
			\item[(ii)] In this case $\theta$ is of the form $\mu x_j. \delta_{x_j}$. By the fact that $(\theta, B)$ is good, we have that the formula $\psi_B \land n\text{-}\exp(\mu x_j . \delta_{x_j})$ is consistent and $\theta \unlhd \delta_{x_i}$, hence $\theta \lhd \mu x_i . \delta_{x_i}$. Therefore, we can apply the induction hypothesis to conclude that $\psi_B \land \psi_{U_j}$ is consistent. It follows that $B \in [\![\exp(\delta_{x_j})]\!]$, so an application of Theorem \ref{thm:gamalg} gives the required strategy for $\exists$.
			
			\item[(iii)] $\theta$ is a bound variable $x_j$ of $\xi$. Then the fact that $\psi_B \land n\text{-}\exp(x_j)$ is consistent implies that $B \in U_{x_j}$, from which we can obtain the required strategy for $\exists$ in the same way as in the previous case.
		\end{itemize}
	This finishes the proof of Lemma~\ref{lem:innerind}
	\end{proof}
	\noindent \emph{Proof of Lemma \ref{lem:truth}.} We proceed by induction on $\xi$. Suppose that the thesis holds for all subformulas of $\xi$. We will show that $\exists$ has a winning strategy in the game $\mathcal{E}(\xi, \mathbb{S})$ initialised at $(\xi, A)$.
	
	The point is that $\exists$ can initially ensure that at each position $(\theta, B)$ reached, the formula $\psi_B \land \exp_\xi(\theta)$ is consistent (note that by hypothesis this is the case for the initial position). Let $\gamma$ be a match where $\exists$ employs this strategy. If at some point in $\gamma$ a $\mu$-formula is reached, let $(\mu x_i . \delta_{x_i}, B)$ be the first such position. The syntactic restrictions on $\mu_c \mathsf{ML}$ ensure that $\mu x_i . \delta_{x_i}$ will be a free subformula of $\xi$, whence $n\text{-}\exp(\mu x_i . \delta_{x_i}) = \mu x_i . \delta_{x_i}$. Therefore we can invoke Lemma \ref{lem:innerind} to obtain $A \in [\![\exp(\mu x_i . \delta_{x_i})]\!]$. Theorem \ref{thm:gamalg} supplies $\exists$ with a strategy to follow from here on out. 
		
	Now suppose that no $\mu$-formula is reached in some complete match $\gamma$. If $\gamma$ is infinite, it must be winning for $\exists$. Finally, if $\gamma$ is finite, the player $\forall$ must have gotten stuck, or at some point a free variable of $\xi$ is reached. The latter is also winning for $\exists$ because of the assumption that for every position $(\theta, B)$ reached, it holds that $\psi_B \land \exp(\theta)$ is consistent. \hfill \qed
	\paragraph{Completeness}
	The goal of this paragraph is to prove completeness for certain well-behaved $\mu_c$-logics.
	
	Given a logic $\mathsf{L}$, we define its canonical model as usual. 
	\begin{definition}
		The \textit{canonical model} $\mathbb{S}^\mathsf{L} := (S^{\mathsf L}, R^{\mathsf L}, V^{\mathsf L})$ of a logic $\mathsf{L}$ is given by:
		\begin{multicols}{2}	
		\begin{itemize}
			\item $S^\mathsf{L} := \{\Gamma : \Gamma \text{ is maximally $\mathsf{L}$-consistent}\}$.
			\item $\Gamma R^\mathsf{L} \Delta :\Leftrightarrow (\lb \varphi \in \Gamma \Rightarrow \varphi \in \Delta)$.
			\item $V^\mathsf{L}(p) := \{\Gamma : p \in \Gamma\}$.
			\item[]  \hfill \qedhere
		\end{itemize}
	\end{multicols}
	\end{definition}
	For (infinitary) canonical models there is also a standard existence lemma:
	\begin{lemma}
	\label{lem:existencecan}
		For any state $\Gamma$ of a canonical model $\mathbb{S}^\mathsf{L}$: 
		\[
		\textnormal{If $\ld \varphi \in \Gamma$, then there is a state $\Delta$ such that $\Gamma R^\mathsf{L} \Delta$ and $\varphi \in \Delta$.}
		\]
	\end{lemma}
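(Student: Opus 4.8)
The plan is to carry out the standard Henkin/Lindenbaum argument, taking a little extra care because the paper works in negation normal form. Fix a state $\Gamma$ of $\mathbb{S}^\mathsf{L}$ with $\ld\varphi\in\Gamma$, and consider the set $\Gamma_0 := \{\psi : \lb\psi\in\Gamma\}\cup\{\varphi\}$. This set is designed so that any maximally $\mathsf{L}$-consistent $\Delta\supseteq\Gamma_0$ immediately satisfies both required conclusions: $\varphi\in\Delta$ since $\varphi\in\Gamma_0$, and $\Gamma R^\mathsf{L}\Delta$ since $\lb\psi\in\Gamma$ entails $\psi\in\Gamma_0\subseteq\Delta$. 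So the whole lemma reduces to showing that $\Gamma_0$ is $\mathsf{L}$-consistent, after which the Lindenbaum-style extension (noted just before this lemma) produces the desired $\Delta$.

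To show $\Gamma_0$ is consistent I would argue by contradiction. If it is $\mathsf{L}$-inconsistent, then there are $\psi_1,\dots,\psi_n$ with each $\lb\psi_i\in\Gamma$ and $\mathsf{L}\vdash(\psi_1\land\cdots\land\psi_n\land\varphi)\to\bot$, hence $\mathsf{L}\vdash(\psi_1\land\cdots\land\psi_n)\to{\sim}\varphi$. Applying the monotonicity rule for $\lb$ gives $\mathsf{L}\vdash\lb(\psi_1\land\cdots\land\psi_n)\to\lb{\sim}\varphi$, and combining this with the fact that $\lb$ distributes over finite conjunctions — which follows from the dual of the additivity axiom (and, in the degenerate $n=0$ case, the dual of normality) — yields $\mathsf{L}\vdash(\lb\psi_1\land\cdots\land\lb\psi_n)\to\lb{\sim}\varphi$. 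Since each $\lb\psi_i$ lies in the consistent set $\Gamma$, maximal consistency forces $\lb{\sim}\varphi\in\Gamma$. But $\lb{\sim}\varphi$ and $\ld\varphi$ are provably contradictory (via the rules expressing that $\ld,\lb$ are duals, together with ${\sim}\varphi\land\varphi\leftrightarrow\bot$), contradicting the consistency of $\Gamma$, which contains $\ld\varphi$.

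I do not expect any real obstacle here: the argument is entirely routine. The only point requiring attention is purely bookkeeping — because $\lb$ is primitive rather than defined as $\neg\ld\neg$, one must invoke the \emph{dual} forms of normality, additivity and the monotonicity rule, and the dual rules linking $\ld$ with $\lb$ and $\neg$ with ${\sim}$ (all tacitly present according to the footnote accompanying the axiomatisation of $\mu_c\mathsf{K}$), rather than unfolding definitions as one would in a Hilbert system based on $\neg,\ld$.
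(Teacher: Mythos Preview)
Your argument is correct and is exactly the standard Henkin-style existence lemma proof; the paper does not give a proof of this lemma at all, simply stating it as ``a standard existence lemma,'' so your proposal supplies precisely the routine argument the paper omits. Your care about the negation-normal-form bookkeeping (using ${\sim}$ and the dual axioms/rules rather than a primitive $\neg$) is appropriate and matches the paper's setup.
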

	Generally, a $\mu_c$-logic $\mathsf{L}$ will lack the compactness property. It is well-known that this prevents one to prove a Truth Lemma for the (standard) canonical model of $\mathsf{L}$. Indeed, if there are unsatisfiable sets of formulas which are finitely satisfiable, then, because derivations are finite objects, there will be maximally consistent sets which are unsatisfiable. 
	
	The following lemma is analogous to Lemma \ref{lem:admitbasicadmitmucforlogics}.
		\begin{lemma}
		\label{lem:iso}
		Let $\mathsf{L}$ be a logic and let $\mathcal{F}$ be a class of frames that admits filtration and contains the canonical frame $(S^\mathsf{L}, R^\mathsf{L})$. For any finite and $\sim$FL-closed set $\Sigma$, the class $\mathcal{F}$ contains a frame underlying some model $\mathbb{S}$ over $\Sigma$ with respect to $\mathsf{L}$.
	\end{lemma}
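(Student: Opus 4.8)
The plan is to build the required model by filtering the canonical model and then checking that the result is in fact a model over $\Sigma$ with respect to $\mathsf{L}$. Concretely, take the canonical frame $(S^\mathsf{L}, R^\mathsf{L})$, which lies in $\mathcal{F}$ by hypothesis, equip it with the canonical valuation $V^\mathsf{L}$ to form the canonical model $\mathbb{S}^\mathsf{L}$, and apply filtration through $\Sigma$. Since $\mathcal{F}$ admits filtration, there is a filtration $\overline{\mathbb{S}} = (\overline{S}, \overline{R}, \overline{V})$ of $\mathbb{S}^\mathsf{L}$ with $(\overline{S}, \overline{R}) \in \mathcal{F}$. I claim this $\overline{\mathbb{S}}$ is a model over $\Sigma$ with respect to $\mathsf{L}$, which establishes the lemma.

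First I would match up the state sets. By the Lindenbaum lemma and the standard properties of maximally $\mathsf{L}$-consistent sets, two maximally $\mathsf{L}$-consistent sets $\Gamma, \Gamma'$ satisfy $\Gamma \sim^{\mathbb{S}^\mathsf{L}}_\Sigma \Gamma'$ exactly when $\Gamma \cap \Sigma = \Gamma' \cap \Sigma$ (here one uses that for $\varphi \in \Sigma$ we have $\mathbb{S}^\mathsf{L}, \Gamma \Vdash \varphi$ iff $\varphi \in \Gamma$; this requires a Truth-Lemma-style fact for the canonical model restricted to $\Sigma$, which is where I would have to be a little careful — but note that for the \emph{coarsest} filtration this identification of equivalence classes with $\Sigma$-types is exactly the intended bookkeeping, and $\Sigma$ being $\sim$FL-closed makes the boolean and fixpoint-unfolding cases go through). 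Hence $\overline{S} = S^\mathsf{L}/{\sim_\Sigma}$ is in bijection with $\{\Gamma \cap \Sigma : \Gamma \text{ maximally } \mathsf{L}\text{-consistent}\}$, which is the required state set $S$. The valuation clause is immediate: $\overline{V}(p) = \{\overline{\Gamma} : \Gamma \Vdash p\} = \{\overline{\Gamma} : p \in \Gamma\} = \{s : p \in s\}$ for $p \in \Sigma$, matching the definition of a model over $\Sigma$.

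Next I would verify the relational sandwich $R^\mathsf{min} \subseteq \overline{R} \subseteq R^\mathsf{max}$ in the sense of the ``model over $\Sigma$'' definition. The upper bound is the easy half: if $\overline{\Gamma}\, \overline{R}\, \overline{\Delta}$ then, since $\overline{R} \subseteq R^\mathsf{max}$ in the filtration sense, for every $\lb\varphi \in \Sigma$ with $\mathbb{S}^\mathsf{L}, \Gamma \Vdash \lb\varphi$ we get $\mathbb{S}^\mathsf{L}, \Delta \Vdash \varphi$; translating via the canonical-model correspondence on $\Sigma$ gives $\lb\varphi \in \Gamma \cap \Sigma \Rightarrow \varphi \in \Delta \cap \Sigma$, i.e. $A R^\mathsf{max} B$ for the corresponding $\Sigma$-types. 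For the lower bound, suppose $\psi_A \land \ld \psi_B$ is $\mathsf{L}$-consistent, where $A = \Gamma \cap \Sigma$, $B = \Delta \cap \Sigma$. By the existence lemma for canonical models (Lemma~\ref{lem:existencecan}) applied to a maximally consistent extension of $\{\psi_A, \ld\psi_B\}$, one finds maximally consistent sets $\Gamma' \supseteq A$, $\Delta' \supseteq B$ with $\Gamma' R^\mathsf{L} \Delta'$; then $\Gamma' \sim_\Sigma \Gamma$ and $\Delta' \sim_\Sigma \Delta$, so $\overline{\Gamma}\, R^\mathsf{min}\, \overline{\Delta}$ in the filtration sense, and $R^\mathsf{min} \subseteq \overline{R}$ gives $\overline{\Gamma}\, \overline{R}\, \overline{\Delta}$.

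The main obstacle I anticipate is precisely the bridge between syntax and semantics in the canonical model: the statement ``$\mathbb{S}^\mathsf{L}, \Gamma \Vdash \varphi \Leftrightarrow \varphi \in \Gamma$ for $\varphi \in \Sigma$'' cannot hold in full generality for a non-compact $\mu_c$-logic (that failure is the whole motivation of the section). So I would not actually route through the infinitary canonical \emph{model}'s semantics at all; instead I would phrase the whole argument at the level of the canonical \emph{frame} plus the purely syntactic relations $R^\mathsf{min}, R^\mathsf{max}$, showing that the filtered frame's edge relation, transported along the bijection $\overline{S} \cong S$, lands between the syntactic $R^\mathsf{min}$ and $R^\mathsf{max}$ of the ``model over $\Sigma$'' definition — using only the existence lemma (Lemma~\ref{lem:existencecan}), the coarsest/finest filtration relations being defined syntactically in both settings, and $\sim$FL-closedness of $\Sigma$ to see that $\Sigma$-types are exactly the $\sim_\Sigma$-classes. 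Once that alignment is in hand, endowing the filtered frame with the valuation $V(p) = \{s : p \in s\}$ produces a model over $\Sigma$ with respect to $\mathsf{L}$ whose underlying frame is in $\mathcal{F}$, as required.
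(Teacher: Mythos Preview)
You have correctly put your finger on the real obstacle --- the failure of the Truth Lemma for fixpoint formulas in $\mathbb{S}^\mathsf{L}$ --- but your proposed workaround does not close the gap. The hypothesis ``$\mathcal{F}$ admits filtration'' hands you a \emph{semantic} filtration: its state set is $S^\mathsf{L}/{\sim_\Sigma^{\mathbb{S}^\mathsf{L}}}$, where $\Gamma \sim_\Sigma^{\mathbb{S}^\mathsf{L}} \Gamma'$ means agreement on \emph{satisfaction} of $\Sigma$-formulas, and its relation is sandwiched between the semantic $R^{\mathsf{min}}$ and $R^{\mathsf{max}}$ of Definition~\ref{def:filtration}. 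You need the state set to be (in bijection with) $\{\Gamma\cap\Sigma\}$ and the relation to be sandwiched between the \emph{syntactic} $R^{\mathsf{min}}$ and $R^{\mathsf{max}}$ of a model over $\Sigma$. Identifying the two quotients, and likewise the two sandwiches, is precisely the statement ``$\mathbb{S}^\mathsf{L},\Gamma\Vdash\varphi \Leftrightarrow \varphi\in\Gamma$ for $\varphi\in\Sigma$'', which you yourself note need not hold. Saying ``I would phrase the argument at the level of the frame plus the syntactic relations'' does not help: the frame the hypothesis produces lives over the semantic quotient, and you have no independent reason to believe that frame lies in $\mathcal{F}$ once you re-index it by $\Sigma$-types.

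The paper supplies exactly the missing idea. Enumerate the fixpoint subformulas $\varphi_1,\ldots,\varphi_n\in\Sigma$, choose fresh propositional variables $p_1,\ldots,p_n$, and let $\tau:\Sigma\to\mathsf{ML}$ replace each $\varphi_i$ by $p_i$ (and commute with the remaining connectives). On the canonical frame, change the valuation to $V'$ with $V'(p_i)=\{\Gamma:\varphi_i\in\Gamma\}$, obtaining $\mathbb{S}'$. Because $\tau(\xi)$ is now a \emph{basic modal} formula, a routine induction using only the definition of $R^\mathsf{L}$ and the existence lemma gives the genuine equivalence $\mathbb{S}',\Gamma\Vdash\tau(\xi)\Leftrightarrow\xi\in\Gamma$ for all $\xi\in\Sigma$. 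Now filter $\mathbb{S}'$ through the FL-closed $\mathsf{ML}$-set $\tau[\Sigma]$: the hypothesis applies (the underlying frame is still $(S^\mathsf{L},R^\mathsf{L})\in\mathcal{F}$), and the resulting semantic $\sim_{\tau[\Sigma]}$-classes \emph{do} coincide with the syntactic $\Sigma$-types via $[\Gamma]\mapsto\Gamma\cap\Sigma$, after which your sandwich verification goes through exactly as you sketched. In short, the translation $\tau$ together with the tailored valuation $V'$ is what restores the syntax--semantics bridge that your argument needs but cannot get directly.
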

	\begin{proof}
		We will apply a form of filtration to the canonical model $\mathbb{S}^\mathsf{L}$. As in the proof of Lemma \ref{lem:admitbasicadmitmucforlogics}, we let $\varphi_1, \ldots, \varphi_n$ be an enumeration of the formulas of the form $\eta x . \psi$ in $\Sigma$. For every such formula $\varphi_i$, we pick a unique propositional variable $p_i$ not occurring in $\Sigma$.
		
		We will define an alternative valuation function $V' : \mathsf{P} \rightarrow \mathcal{P}(S^\mathsf{L})$. In contrast to the proof of Lemma \ref{lem:admitbasicadmitmucforlogics}, we will not let the valuation of $p_i$ be the \textit{meaning} of $\varphi_i$ in $\mathbb{S}^\mathsf{L}$, but rather we let $p_i$ be true at those $\Gamma$ for which $\varphi_i \in \Gamma$. Note that if a Truth Lemma would hold for $\mathbb{S}^\mathsf{L}$, these two options would be equivalent.
		\[
		V'(p) := \begin{cases} \{\Gamma : \varphi_i \in \Gamma\} &\text{ if $p = p_{i}$ for some $\varphi_i \in \Sigma$;} \\
			V(p) &\text{otherwise.}
		\end{cases}
		\]
		Let $\mathbb{S}'$ be the model $\mathbb{S}^\mathsf{L}$, but with $V'$ as valuation function.	We define the translation $\tau : \Sigma \rightarrow \mathsf{\mathsf{ML}}$ in the same way as we did in the proof of Lemma \ref{lem:admitbasicadmitmucforlogics}. The set $\tau[\Sigma]$ is again $\sim$FL-closed. A straightforward induction shows that for every $\xi \in \Sigma$:
		\begin{equation}
			\label{eq:subs}
			\mathbb{S}', \Gamma \Vdash \xi \Leftrightarrow \tau(\xi) \in \Gamma.
		\end{equation}
		Since the frame underlying $\mathbb{S}'$ is in $\mathcal{F}$, we can apply the assumed admissibility of filtration to obtain a filtration $\mathbb{S}^{\tau[\Sigma]} = (S^{\tau[\Sigma]}, R^{\tau[\Sigma]}, V^{\tau[\Sigma]})$ of $\mathbb{S}'$ through ${\tau[\Sigma]}$ such that the frame $(S^{\tau[\Sigma]}, R^{\tau[\Sigma]})$ belongs to $\mathcal{F}$. 
		
		We will finish the proof by showing that $\mathbb{S}^{\tau[\Sigma]}$ is isomorphic to a model over $\Sigma$. We define the set of states $S^\Sigma := \{\Gamma \cap \Sigma : \Gamma \text{ is maximally $\mathsf{L}$-consistent}\}$ and claim that the map
		\[
		h : [\Gamma] \mapsto \Gamma \cap \Sigma
		\]
		is a well-defined bijection from $S^\mathsf{L}/ \sim^{\mathbb{S}'}_{\tau[\Sigma]}$ to $S^\Sigma$.	For well-definedness, suppose $\Gamma \sim_{\tau[\Sigma]}^{\mathbb{S}'} \Gamma'$ and let $\varphi \in \Sigma$. Using the equivalence (\ref{eq:subs}), we have
		\[
		\varphi \in \Gamma \Leftrightarrow \mathbb{S}', \Gamma \Vdash \tau(\varphi) \Leftrightarrow \mathbb{S}', \Gamma' \Vdash \tau(\varphi) \Leftrightarrow \varphi \in \Gamma',
		\]
		as required. 
		
		Injectivity is similar: if $\Gamma \cap \Sigma = \Gamma' \cap \Sigma$, then for all $\tau(\varphi) \in \tau[\Sigma]$, we have: 
		\[
		\Gamma \Vdash \tau(\varphi) \Leftrightarrow \varphi \in \Gamma \Leftrightarrow \varphi \in \Gamma' \Leftrightarrow \Gamma' \Vdash \tau(\varphi) . 
		\]
		For surjectivity, take $\Gamma \cap \Sigma$ for some any $\Gamma \in S^\mathsf{L}$. Then $h([\Gamma]) = \Gamma \cap \Sigma$, as required. 
		
		Now let the relation $R^\Sigma \subseteq S^\Sigma \times S^\Sigma$ and the valuation $V^\Sigma : \mathsf{P} \rightarrow \mathcal{P}(S^\Sigma)$ be given by transporting the structure of $\mathbb{S}^{\tau[\Sigma]}$ along $h$. More precise, we let 
		\[
		AR^\Sigma B :\Leftrightarrow h^{-1}(A) R^{\tau[\Sigma]} h^{-1}(B).
		\]
		We claim that $R^\mathsf{min} \subseteq R^\Sigma \subseteq R^\mathsf{max}$. 
		
		First, suppose that $A R^\mathsf{min} B$. Then $\psi_A \land \ld \psi_B$ is $\mathsf{L}$-consistent. Pick some $\Gamma \in S^\mathsf{L}$ containing both $\psi_A$ and $\ld \psi_B$. By Lemma \ref{lem:existencecan}, there is a $\Delta \in S^\mathsf{L}$ such that $\Gamma R^\mathsf{L} \Delta$ and $\psi_B \in \Delta$. Since $R^{\tau[\Sigma]}$ contains the finest filtration, we have $[\Gamma] R^{\tau[\Sigma]} [\Delta]$ and thus $h([\Gamma])R^\Sigma h([\Delta])$. The required result follows from the fact that $h([\Gamma]) = A$ and $h([\Delta]) = B$. 
		
		Now suppose that $AR^\Sigma B$. We will show that $A R^\mathsf{max} B$. To that end, let $\lb \varphi \in \Sigma$ such that $\lb \varphi \in A$. Pick $\Gamma \supset A$ and $\Delta \supset B$ from $S^\mathsf{L}$. Since $[\Gamma] = h^{-1}(A)$ and $[\Delta] = h^{-1}(B)$, we have $[\Gamma]R^{\tau[\Sigma]}[\Delta]$. We now use the fact that $R^{\tau[\Sigma]}$ is contained in the coarsest filtration. This means that for all $\lb \psi \in \tau[\Sigma]$ such that $\mathbb{S}', \Gamma \Vdash \lb \psi$, we have $\mathbb{S}', \Delta \Vdash \psi$. By assumption we have $\lb \varphi \in \Gamma$, whence the equivalence (\ref{eq:subs}) gives $\mathbb{S}', \Gamma \Vdash \tau(\lb \varphi)$, \textit{i.e.} $\mathbb{S}', \Gamma \Vdash \lb\tau (\varphi)$. It follows that $\mathbb{S}', \Delta \Vdash \tau(\varphi)$. Finally, another application of the equivalence (\ref{eq:subs}) yields $\varphi \in \Delta$, hence $\varphi \in B$, as required.
		
		Lastly, for any $p \in \Sigma$, we define 
		\[
		V^\Sigma(p) := \{A \in S^\Sigma : h^{-1}(A) \in V^{\tau[\Sigma]}(p)\} = \{A \in S^\Sigma : p \in A\}, 
		\]
		which suffices. 
	\end{proof}
	\begin{theorem}
		Let $\mathsf{L}$ be a canonical logic in the basic modal language such $\mathsf{Fr}(\mathsf{L})$ admits filtration. Then $\mu_c$-$\mathsf{L}$ is sound and complete with respect to $\mathsf{Fr}(\mathsf{L})$.
	\end{theorem}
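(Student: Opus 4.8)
The plan is to prove soundness and completeness of $\mu_c$-$\mathsf{L}$ with respect to $\mathsf{Fr}(\mathsf{L})$ separately, the completeness half carrying essentially all of the content. Soundness I would dispatch quickly: the propositional axioms, normality and additivity are valid on every Kripke frame; the prefixpoint axiom and the least prefixpoint rule are sound over every frame by the Knaster--Tarski description of $[\![\mu x.\varphi]\!]$ (this is exactly what makes Kozen's axiomatisation sound, and nothing here is special to the continuous fragment); Modus Ponens, Monotonicity and Uniform Substitution preserve frame-validity; and every axiom of $\mathsf{L}$ is valid on $\mathsf{Fr}(\mathsf{L})$ by definition. Since $\mu_c$-$\mathsf{L}$ is the least $\mu_c$-logic containing $\mathsf{L}$, an induction on derivations then gives that every theorem of $\mu_c$-$\mathsf{L}$ is valid on $\mathsf{Fr}(\mathsf{L})$.

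For completeness it suffices to show that every $\mu_c$-$\mathsf{L}$-consistent formula is satisfiable in a model whose underlying frame lies in $\mathsf{Fr}(\mathsf{L})$: applying this to ${\sim}\psi$ for a non-theorem $\psi$ — which is then consistent, since $\mu_c$-$\mathsf{L}\vdash({\sim}\psi\lor\psi)\leftrightarrow\top$ — and using soundness to exclude $\psi$ at the witnessing state yields the completeness statement. So fix a consistent $\varphi$, which we may take clean by passing to an equivalent clean alphabetic variant, and let $\Sigma$ be the $\sim$-closure of $Cl(\varphi)$, a finite $\sim$FL-closed set containing $\varphi$. I would then invoke Lemma~\ref{lem:iso} with the $\mu_c$-logic $\mu_c$-$\mathsf{L}$ in the role of its ``$\mathsf{L}$'' and with $\mathcal{F}=\mathsf{Fr}(\mathsf{L})$, obtaining a model $\mathbb{S}^\Sigma=(S^\Sigma,R^\Sigma,V^\Sigma)$ over $\Sigma$ with respect to $\mu_c$-$\mathsf{L}$ whose frame is in $\mathsf{Fr}(\mathsf{L})$. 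Since $\varphi$ is consistent it lies in some maximally $\mu_c$-$\mathsf{L}$-consistent $\Gamma$, so $A:=\Gamma\cap\Sigma$ is a state of $\mathbb{S}^\Sigma$ with $\varphi\in A$, and the Truth Lemma (Lemma~\ref{lem:truth}) gives $A\in[\![\varphi]\!]^{\mathbb{S}^\Sigma}$. This $\mathbb{S}^\Sigma$ is the desired model.

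The only nontrivial point, and the main obstacle, is verifying the hypotheses of Lemma~\ref{lem:iso} for $\mathcal{F}=\mathsf{Fr}(\mathsf{L})$: that $\mathsf{Fr}(\mathsf{L})$ admits filtration, which is given, and that $\mathsf{Fr}(\mathsf{L})$ contains the canonical frame $(S^{\mu_c\text{-}\mathsf{L}},R^{\mu_c\text{-}\mathsf{L}})$ of $\mu_c$-$\mathsf{L}$. The latter is where canonicity of $\mathsf{L}$ is used, and I would argue it as follows. Because $\mu_c$-$\mathsf{L}$ contains $\mathsf{L}$ and is closed under uniform substitution, every axiom $\chi$ of $\mathsf{L}$ — an $\mathsf{ML}$-formula — is true at every state of $(S^{\mu_c\text{-}\mathsf{L}},R^{\mu_c\text{-}\mathsf{L}})$ under every valuation of the form $p\mapsto\{\Gamma:\theta_p\in\Gamma\}$ with $\theta_p\in\mu_c\mathsf{ML}$; equivalently, $\mathsf{L}$ is valid on the canonical \emph{descriptive general} frame of $\mu_c$-$\mathsf{L}$. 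The remaining step is to transfer this to validity of $\mathsf{L}$ on the bare Kripke frame $(S^{\mu_c\text{-}\mathsf{L}},R^{\mu_c\text{-}\mathsf{L}})$; here one uses that $\mathsf{L}$ is canonical in the form that validity of $\mathsf{L}$ persists from a descriptive general frame to its underlying Kripke frame, which is how canonicity is established in all concrete cases (e.g. for Sahlqvist $\mathsf{L}$), applied to the (descriptive) canonical general frame of $\mu_c$-$\mathsf{L}$. A useful auxiliary observation is that the restriction map $\Gamma\mapsto\Gamma\cap\mathsf{ML}$ is a surjective bounded morphism from $(S^{\mu_c\text{-}\mathsf{L}},R^{\mu_c\text{-}\mathsf{L}})$ onto the canonical frame of $\mathsf{L}$ (surjectivity and the back clause use normality of $\lb$ in $\mu_c\mathsf{K}$ together with the existence lemma, Lemma~\ref{lem:existencecan}), which pins down the relationship between the two canonical models. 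Once $(S^{\mu_c\text{-}\mathsf{L}},R^{\mu_c\text{-}\mathsf{L}})\in\mathsf{Fr}(\mathsf{L})$ is secured, Lemma~\ref{lem:iso} applies and the completeness argument above concludes. I expect exactly this transfer from the canonical general frame to its Kripke reduct to be the delicate step; everything else is bookkeeping and appeals to lemmas already in hand.
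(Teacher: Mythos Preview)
Your overall plan is exactly the paper's: soundness is dispatched by the frame-validity of the fixpoint axioms and rules, and completeness proceeds by taking a clean consistent $\varphi$, forming its $\sim$FL-closure $\Sigma$, invoking Lemma~\ref{lem:iso} to obtain a model over $\Sigma$ whose underlying frame lies in $\mathsf{Fr}(\mathsf{L})$, and then applying the Truth Lemma (Lemma~\ref{lem:truth}) at a state $A$ containing $\varphi$. The paper's own proof is a four-line version of precisely this.

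Where you add content is in the verification of the hypothesis of Lemma~\ref{lem:iso}. The paper simply writes ``by canonicity the canonical frame $(S^\mathsf{L}, R^\mathsf{L})$ is contained in $\mathsf{Fr}(\mathsf{L})$'' and moves on; it does not distinguish between the canonical frame of the basic logic $\mathsf{L}$ and that of $\mu_c\text{-}\mathsf{L}$, even though Lemma~\ref{lem:iso} must be applied with the $\mu_c$-logic $\mu_c\text{-}\mathsf{L}$ for the Truth Lemma to be available. You are right that this is the only point with real content, and your proposed route via d-persistence (validity of $\mathsf{L}$ on the descriptive canonical general frame of $\mu_c\text{-}\mathsf{L}$ transfers to its Kripke reduct) is the standard way to secure it; it covers all the examples the paper lists, which are Sahlqvist. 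Strictly speaking this uses a hypothesis stronger than the bare ``$\mathsf{L}$ is canonical'' in the theorem statement, a subtlety the paper does not engage with either. One caution about your auxiliary remark: the restriction map $\Gamma\mapsto\Gamma\cap\mathsf{ML}$ is indeed a surjective bounded morphism onto $(S^{\mathsf{L}},R^{\mathsf{L}})$, but surjective bounded morphisms transfer frame-validity from source to target, not the reverse, so this observation by itself does not place $(S^{\mu_c\text{-}\mathsf{L}},R^{\mu_c\text{-}\mathsf{L}})$ in $\mathsf{Fr}(\mathsf{L})$; the d-persistence argument is what carries the weight.
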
	
	\begin{proof}
		Soundness follows from the fact the fixpoint axioms and rules are sound on the class of all frames. For completeness, let $\varphi \in \mu_c \mathsf{ML}$ be $\mathsf{L}$-consistent; we will show that $\varphi$ is satisfiable in a model based on a $\mathsf{L}$-frame. Without loss of generality we may assume that $\varphi$ is clean. Let $\Sigma$ be the $\sim$FL-closure of $\{\varphi\}$. Note that by canonicity the canonical frame $(S^\mathsf{L}, R^\mathsf{L})$ is contained in $\mathsf{Fr}(\mathsf{L})$. Therefore, we can use Lemma \ref{lem:iso} to obtain a model $\mathbb{S}^\Sigma$ over $\Sigma$ with respect to $\mathsf{L}$ which is based on an $\mathsf{L}$-frame. By the $\mathsf{L}$-consistency of $\varphi$, there is a state $A \in S^\Sigma$ such that $\varphi \in A$. Finally, Lemma \ref{lem:truth} gives $\mathbb{S}^\Sigma, A \Vdash \varphi$, as required.
	\end{proof}
	For instance, the logic $\mu_c$-$\mathsf{KB}$ is sound and complete with respect to the class of symmetric frames. Some other examples of basic modal logics that satisfy the hypotheses of the above theorem are: $\mathsf{K}$, $\mathsf{T}$, $\mathsf{K4}$, $\mathsf{S4}$ and $\mathsf{S5}$.
	
	\bibliographystyle{eptcs}
	\bibliography{generic}
\end{document}